\newtheorem{assumption}{\bf Assumption}
  \providecommand\BibTeX{{%
    \normalfont B\kern-0.5em{\scshape i\kern-0.25em b}\kern-0.8em\TeX}}}
\begin{document}

\title{Quantized VCG Mechanisms for Polymatroid Environments }
\titlenote{This work was supported in part by the National Science Foundation grant TWC-1314620.}

\author{Hao Ge and Randall A. Berry}
\affiliation{%
	\institution{Department of Electrical and Computer Engineering, Northwestern University, Evanston, Illinois}
}


\begin{abstract}
Many network resource allocation problems can be viewed as allocating a divisible resource, where the allocations are constrained to lie in a polymatroid. We consider market-based mechanisms for such problems. Though the Vickrey-Clarke-Groves (VCG) mechanism can provide the efficient allocation with strong incentive properties (namely dominant strategy incentive compatibility), its well-known high communication requirements can prevent it from being used. There have been a number of approaches for reducing the communication costs of VCG by weakening its incentive properties. Here, instead we take a different approach of reducing communication costs via quantization while maintaining VCG's dominant strategy incentive properties. The cost for this approach is a loss in efficiency which we characterize. We first consider quantizing the resource allocations so that agents need only submit a finite number of bids instead of full utility function. We subsequently consider quantizing the agent's bids. 
\end{abstract}

\begin{CCSXML}
	<ccs2012>
	<concept>
	<concept_id>10003033.10003068.10003078</concept_id>
	<concept_desc>Networks~Network economics</concept_desc>
	<concept_significance>500</concept_significance>
	</concept>
	<concept>
	<concept_id>10003033.10003079.10011672</concept_id>
	<concept_desc>Networks~Network performance analysis</concept_desc>
	<concept_significance>300</concept_significance>
	</concept>
	<concept>
	<concept_id>10003456.10003457.10003490.10003514</concept_id>
	<concept_desc>Social and professional topics~Information system economics</concept_desc>
	<concept_significance>100</concept_significance>
	</concept>
	</ccs2012>
\end{CCSXML}

\ccsdesc[500]{Networks~Network economics}
\ccsdesc[300]{Networks~Network performance analysis}
\ccsdesc[100]{Social and professional topics~Information system economics}

\keywords{Mechanism design, Quantization, Worst-case efficiency}

\maketitle

\section{Introduction}
Efficient allocation of limited resources is a fundamental problem in networked systems.  Economic-based models are a common way of addressing this problem.  In such models, 
agents are endowed with utility functions and the goal is to maximize the social welfare, given by the sum of the agents' utilities. Hence, given the agents' utility functions, efficient resource allocation then reduces to solving an optimization problem. In networked systems, two challenges to such an approach are (1) the agents are distributed and only limited  communication may be available to communicate user utilities and resource allocations decisions, and (2) self-interested users may not have incentive to correctly report their utilities. 

The incentive issue has been studied through the lens of mechanism design. One of the most celebrated results in this literature is the Vickrey-Clarke-Groves (VCG) mechanism \cite{vickrey1961counterspeculation,clarke1971multipart,groves1973incentives}, which provides an elegant solution to the incentive issue. In the VCG mechanism, each agent is required to report her utility function and in turn receives an allocation and makes a payment based on the reports of every agent. Through a carefully designed payment, the VCG mechanism achieves the efficient outcome under a strong incentive guarantee. Namely, it is dominant strategy incentive compatible (DSIC) for agents to truthfully report their utility and if they do so, then the mechanism simply maximizes the sum of the reported utility, yielding the efficient outcome. However, in the context of divisible network resources, such as power, bandwidth or storage space, each agent's utility may be infinite dimensional and hence VCG mechanisms do not address the issue of limited communication.

The limited communication issue has also been well studied. In particular, we highlight the seminal work of Kelly, \cite{kelly1997charging,kelly1998rate}, in which agents submit one-dimensional bids and receive an allocation determined by the ratio of their bid and a congestion price.  For price-taking agents, who do not anticipate the impact of their actions on the price, this mechanism can lead to the socially optimal allocation, but as shown in \cite{johari2004efficiency}, when buyers can anticipate the impact of their actions on congestion price, this mechanism can lead to an efficiency loss.

In \cite{yang2007vcg,johari2009efficiency}, a ``VCG-Kelly" mechanism was proposed that combined the one-dimensional bids in Kelly's mechanism and the VCG payment rule. This mechanism is shown to obtain the socially optimal outcome as with VCG. However, the incentive guarantee is relaxed to the weaker notion of Nash equilibria instead of a dominant strategy equilibria as in VCG.\footnote{Justifying a Nash equilibria requires that agents are aware of each others' pay-offs (and rationality) or apply some type of dynamic process to reach this point.}

In our previous work, \cite{haoge2018,ge2018quantized}, we introduced an alternative approach for reducing communication costs while maintaining VCG's dominant strategy incentive properties when allocating a single divisible resource. This approach was to quantize the resource and then run VCG on the quantized resource. This incurs some efficiency loss, which was bounded. We build on this approach in this paper, but instead of a single divisible resource, we consider resource allocations that are constrained to lie within a polymatroid in $\mathbb R_+^N$, where $N$ is the number of agents. Such a polymatroid environment is a natural and non-trivial generalization of a single divisible resource. For example, if the single divisible resource represents the rate on a given link in a network, then a polymatroid can be viewed as giving the region of rates obtainable by different source-destination pairs across a network with multiple links (see e.g. \cite{federgruen1986optimal}). As shown in \cite{Yang2005RevenueAS,berry2006kelly}, mechanisms for single links may not work when applied in a more general network setting. Polymatroid environments can also model many other network settings such as multiclass queueing systems \cite{shanthikumar1992multiclass}, various network flow problems \cite{lawler1982computing,imai1983network,lawler1982flow}, spectrum sharing \cite{kang2014sum,berry2013market} and problems in network information theory \cite{tse1998multiaccess,ge2015secure,tse2004diversity}. Given a polymatroid feasible region, we study quantized VCG mechanisms and the resulting efficiency losses for strategic agents. 

The contributions of this paper are the following: 
\begin{itemize}
	\item{Given a polymatroid environment, we determine a way to quantize the resource into partitions so that a feasible allocation of partitions lies in an {\it integral polymatroid}.}
	\item{Based on the integral polymatroid, we propose a \textit{quantized VCG mechanism} for allocating partitions.  This mechanism inherits VCG's DSIC property and further, due to the polymatroid structure, can be implemented using a greedy algorithm. We also characterize the worst-case efficiency loss of this mechanism due to the quantization.}
	\item{In addition to the quantization of the resource, we further study a  {\it rounded VCG mechanism} in which the agents' bids are also quantized. The bid quantization results in the mechanism no longer being DSIC, but we show that $\epsilon$-dominant strategies exist. We characterize the worst-case efficiency for three different types of such strategies.}
	\item{We analyze the trade-off between communication cost and overall efficiency in our proposed mechanisms. We further discuss the impact of the parameters in each mechanism.}
\end{itemize}

The rest of the paper is organized as follows: in Sect. 2,  we review some properties of polymatroids and the VCG mechanism. Our two classes of quantized mechanisms are introduced and analyzed in Sect. 3-4. We conclude the paper in Sect. 5.

\emph{Notations:} $\mathbb{R}_{+}^N$ and $\mathbb{N}^N$ denote the set of nonnegative real-valued and integer-valued vectors of dimension $N$, respectively. We use $\lfloor x \rfloor$ for the floor, $\lceil x \rceil$ for the ceiling, and $\{x\}$ for the fractional part of $x$. We denote by $\sigma_{-i}$ the subset of $\sigma$ for everybody except for agent $i$ and $(z,\sigma_{-i})$ denotes the set $\sigma$ with the $i^{th}$ component replaced with $z$. We write $[ \cdot]^+$ as the projection onto the nonnegative orthant. The indicator function $\mathds{1}_w$ equals 1 if the event $w$ happens, otherwise 0.

\section{Model and Background}
\subsection{Polymatroids}
We begin by reviewing some basic definitions and properties related to polymatroids. 

\begin{definition}\cite{edmonds1970submodular}
	Given a finite ground set $\mathcal{N} = \{1,\cdots,N\}$, a set function $f$: $2^{\mathcal{N}} \rightarrow \mathbb{R}$ is \textit{submodular} if:
	\begin{equation*}
		f(S) + f(T) \geq f(S \cap T) + f(S \cup T),
	\end{equation*}
	for any subsets $S$ and $T$ of $\mathcal N$. If the inequality holds strictly, the set function $f$ is strictly submodular.
\end{definition}

An equivalent definition for submodularity is:
\begin{equation*}
	f(S\cup \{i\})+f(S\cup\{j\}) \geq f(S) + f(S\cup\{i, j\}  ),
\end{equation*} 
for any $S \subseteq \mathcal{N}$ and distinct $i,j \in \mathcal{N} \setminus S$.

A set function $f$: $2^{\mathcal{N}} \rightarrow \mathbb{R}$ is normalized if $f(\emptyset) = 0$; monotone (non-decreasing) if $f(S) \leq f(T), \forall S \subseteq T$; and integer-valued if $f(S) \in \mathbb{Z}, \forall S \subseteq \mathcal{N}$.

\begin{definition}
	Let 
	\begin{equation*}
		P_f = \{{\bf x} \in \mathbb{R}_{+}^N|\sum_{n \in S}x_n \leq f(S), \forall S \in \mathcal{N}\}.
	\end{equation*}
	If the set function $f$ is normalized, monotone and submodular, then $P_f$ is a {\it polymatroid} with rank function $f$.
\end{definition} 

If the rank function of a polymatroid is integer-valued, then it follows that all of corner points are also integer valued. Further, it can sometimes be convenient to regard an \textit{integral polymatroid}, i.e., a polymatroid consisting of only points in $\mathbb N^N$ instead of points in $\mathbb R^N$ \cite{edmonds1970submodular}. We adopt this view here when discussing integral polymatroids. 

The following are two well known properties of polymatroids. 
\begin{lemma}\label{lemma:1}
	Given a polymatroid $P_f$, if ${\bf x^*} \in P_f$, then the polyhedron
	$$
	P_f -{\bf x}^* := \{{\bf x} \in \mathbb{R}_{+}^N | \sum_{n\in S} x_n \leq f(S)-\sum_{n \in S} x^*_n,  \forall S \in \mathcal{N}\}.
	$$
	is also a polymatroid with rank function $\tilde{f}$, where 
	$$
	\tilde{f}(S) = \min_{T:S\subseteq T}(f(T)-\sum_{n \in T} x^*_n).
	$$
\end{lemma}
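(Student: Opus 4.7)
My plan is to break the proof into two independent parts: first, verifying that the polyhedron $P_f-{\bf x}^*$ as defined by the constraints $\sum_{n\in S}x_n\leq f(S)-\sum_{n\in S}x^*_n$ coincides with the polymatroid defined by $\tilde f$, and second, verifying that $\tilde f$ is a legitimate rank function (normalized, monotone, submodular).

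For the first part, I would show both inclusions directly. The $(\supseteq)$ direction is immediate by taking $T=S$ in the definition of $\tilde f$, which recovers the original constraint for $S$. The $(\subseteq)$ direction uses nonnegativity of ${\bf x}$: for any $T\supseteq S$,
\begin{equation*}
\sum_{n\in S}x_n \;\leq\; \sum_{n\in T}x_n \;\leq\; f(T)-\sum_{n\in T}x^*_n,
\end{equation*}
and minimizing the right side over $T\supseteq S$ gives $\sum_{n\in S}x_n\leq \tilde f(S)$.

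For the second part, normalization comes from $\tilde f(\emptyset)=0$: the choice $T=\emptyset$ yields $0$, and ${\bf x}^*\in P_f$ guarantees $f(T)-\sum_{n\in T}x^*_n\geq 0$ for every $T$, so the minimum is exactly zero. Monotonicity is immediate from the observation that $S_1\subseteq S_2$ shrinks the feasible set $\{T:T\supseteq S_2\}$ inside $\{T:T\supseteq S_1\}$, so the minimum can only weakly increase. For submodularity, I would let $T_1,T_2$ be minimizers achieving $\tilde f(S_1),\tilde f(S_2)$, use $T_1\cup T_2\supseteq S_1\cup S_2$ and $T_1\cap T_2\supseteq S_1\cap S_2$ as candidates in the definition of $\tilde f(S_1\cup S_2)$ and $\tilde f(S_1\cap S_2)$, and then combine submodularity of $f$ with the identity $\sum_{n\in T_1\cup T_2}x^*_n+\sum_{n\in T_1\cap T_2}x^*_n=\sum_{n\in T_1}x^*_n+\sum_{n\in T_2}x^*_n$ to conclude $\tilde f(S_1\cup S_2)+\tilde f(S_1\cap S_2)\leq \tilde f(S_1)+\tilde f(S_2)$.

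The only step requiring genuine care is submodularity, since the minimization over supersets in the definition of $\tilde f$ could in principle create coupling between $S_1$ and $S_2$; however, the flexibility that we can freely choose \emph{any} superset pair on the right-hand side (not just minimizers) lets the submodular/modular identities for $f$ and for $\sum x^*_n$ absorb the argument cleanly. Everything else is essentially bookkeeping.
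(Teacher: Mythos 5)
Your proof is correct and complete: the set-equality argument (taking $T=S$ for one inclusion, using nonnegativity of ${\bf x}$ for the other) and the verification that $\tilde f$ is normalized, monotone, and submodular via minimizers $T_1,T_2$ together with submodularity of $f$ and modularity of $S\mapsto\sum_{n\in S}x^*_n$ are exactly the standard argument for polymatroid contraction. The paper itself states this lemma as a well-known property and gives no proof, so there is nothing to diverge from; your write-up supplies precisely the argument the paper implicitly relies on.
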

\begin{lemma}\label{lemma:2.1}
	For any polymatroid $P_f$, there exists a ${\bf x} \in P_f$ such that $\sum_{n=1}^N x_n =f(\mathcal{N})$.
\end{lemma}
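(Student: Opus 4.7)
The plan is to construct an explicit $\mathbf{x} \in P_f$ that saturates the ground set via a greedy procedure. Fix any ordering of $\mathcal{N}$, say $1, 2, \ldots, N$, and define
\[
x_n := f(\{1, \ldots, n\}) - f(\{1, \ldots, n-1\}), \qquad n = 1, \ldots, N,
\]
with the convention $f(\emptyset) = 0$. Monotonicity of $f$ makes each $x_n$ nonnegative, and the definition telescopes so that $\sum_{n=1}^N x_n = f(\mathcal{N}) - f(\emptyset) = f(\mathcal{N})$. Hence the only remaining task is to verify that $\mathbf{x} \in P_f$, i.e., that $\sum_{n \in S} x_n \leq f(S)$ for every $S \subseteq \mathcal{N}$.

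For feasibility I would invoke the ``decreasing marginal returns'' reformulation of submodularity: whenever $A \subseteq B \subseteq \mathcal{N}$ and $n \notin B$, one has $f(A \cup \{n\}) - f(A) \geq f(B \cup \{n\}) - f(B)$. Applying this for each $n \in S$ with $A = S \cap \{1, \ldots, n-1\}$ and $B = \{1, \ldots, n-1\}$ gives
\[
x_n \;\leq\; f(S \cap \{1, \ldots, n\}) - f(S \cap \{1, \ldots, n-1\}).
\]
Summing over $n \in S$ in increasing order, the right-hand side telescopes to $f(S) - f(\emptyset) = f(S)$, which is precisely the polymatroid inequality.

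The main (mild) obstacle is picking the right pairing in the submodularity inequality so that the telescoping on the right-hand side is with respect to prefixes of $S$ rather than prefixes of $\mathcal{N}$; once that is set up, monotonicity yields nonnegativity of each $x_n$ and the remainder is bookkeeping. It is worth noting that this greedy construction in fact produces a vertex of $P_f$, which is consistent with the fact that any such saturating point must lie on the boundary $\sum_n x_n = f(\mathcal{N})$, and it meshes naturally with the greedy VCG implementation that the paper develops later.
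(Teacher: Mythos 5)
Your proof is correct and complete. The paper itself states this lemma as a ``well known property'' of polymatroids and supplies no proof, so there is nothing to compare against; your greedy construction is the standard argument. Each step checks out: nonnegativity of the increments follows from monotonicity, the telescoping sum gives $\sum_n x_n = f(\mathcal{N})$ since $f(\emptyset)=0$, and the feasibility step is a clean application of the diminishing-returns form of submodularity (which follows directly from the paper's lattice definition by taking $S' = A\cup\{n\}$ and $T'=B$ with $n\notin B$), with the right-hand side telescoping over the prefixes of $S$ exactly as you describe. Your closing observation that the construction yields a point on the dominant face ties in naturally with Corollary 1 and the greedy allocation in Algorithm 2 later in the paper.
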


The second lemma shows that the sum constraint over all $\mathcal N$ is always tight for some point within the polymatroid. The set of all such points is referred to as the polymatroid's dominant face, which is formally defined next. 
\begin{definition}
	The \textit{dominant face} of a polymatroid $P_f$, denoted by $\mathcal{D}(P_f)$, is defined as:
	$$
	\mathcal{D}(P_f) = \{{\bf x} \in \mathbb{R}_{+}^N| {\bf x} \in P_f, \sum_{n=1}^N x_n = f(\mathcal{N}) \}.
	$$ 
\end{definition}

As defined in \cite{salimi2015representability}, the minimum distance of a set function $f$ over a ground set $\mathcal{N}$ is:
\begin{equation}
	\bigtriangleup f := \min_{S\subseteq \mathcal{N}} \min_{i,j \in \mathcal{N} \setminus S} |D_f(S \cup \{i\}, S\cup\{j\})|,
\end{equation}
where the distance $D_f(S,T)$ between two arbitrary subsets $S$ and $T$ is defined as:
\begin{equation}
	D_f(S,T) = f(S) + f(T) -f(S \cap T) -f(S\cup T).
\end{equation} 
This can be thought of as an indication of how strictly supermodular the set function is. For a submodular function $f$, the distance between two arbitrary subsets is always non-negative. Furthermore, if the set function $f$ is strictly submodular, then the minimum distance $\bigtriangleup f$ is strictly greater than 0.

Following \cite{salimi2015representability}, we have the following theorem:
\begin{theorem}
	Given a polymatroid $P_f$ with $\bigtriangleup f = 0$, then for any $\eta > 0$, there exists another polymatroid $P_{\tilde{f}}$ such that $\bigtriangleup \tilde{f} > 0$ and $f(S) -\tilde{f}(S) \leq \eta, \forall S \subseteq \mathcal{N}$.
\end{theorem}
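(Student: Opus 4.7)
My approach is to obtain $\tilde f$ by adding to $f$ a small multiple of a strictly submodular, monotone, normalized auxiliary function. Concretely, define $h(S) := |S|(2N - |S|)$, which depends only on $|S|$. Since $g(k) := k(2N-k)$ satisfies $g(0)=0$, has first difference $g(k+1)-g(k) = 2N-2k-1 \geq 1$ for every $k \leq N-1$, and has constant second difference $g(k+2)-2g(k+1)+g(k) = -2$, the function $h$ is normalized, strictly increasing in cardinality, and strictly submodular with $D_h(S\cup\{i\}, S\cup\{j\}) = 2$ for every admissible $S, i, j$. In particular, $\bigtriangleup h = 2$ and $0 \leq h(S) \leq h(\mathcal N) = N^2$.

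Next, set $\tilde f := f + c\, h$ with $c := \eta/N^2$. Because $f$ and $h$ are each normalized, monotone, and submodular, $\tilde f$ inherits all three properties, so $P_{\tilde f}$ is a polymatroid. For every $S \subseteq \mathcal N$ and distinct $i,j \in \mathcal N \setminus S$,
\begin{equation*}
D_{\tilde f}(S\cup\{i\}, S\cup\{j\}) = D_f(S\cup\{i\}, S\cup\{j\}) + c\, D_h(S\cup\{i\}, S\cup\{j\}) \geq 0 + 2c > 0,
\end{equation*}
so $\bigtriangleup \tilde f \geq 2c > 0$. The closeness bound is then immediate: $f(S)-\tilde f(S) = -c\, h(S) \leq 0 \leq \eta$, and moreover $\tilde f(S) - f(S) = c\, h(S) \leq c N^2 = \eta$, so $\tilde f$ approximates $f$ within $\eta$ on both sides.

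The only subtlety I foresee concerns monotonicity of $\tilde f$. A tempting alternative would be to shrink $f$ by subtracting a strictly supermodular perturbation so that $\tilde f \leq f$ (and hence $P_{\tilde f} \subseteq P_f$); however, subtracting a strictly increasing function can drive a marginal $\tilde f(S\cup\{i\})-\tilde f(S)$ negative on subsets where $f$ already has a zero marginal, violating monotonicity. Adding a perturbation that is simultaneously strictly submodular and strictly increasing in cardinality sidesteps this issue entirely and yields the required polymatroid.
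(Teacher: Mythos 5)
Your auxiliary function checks out: $h(S)=|S|(2N-|S|)$ is normalized, monotone, and strictly submodular with $D_h\equiv 2$, so $\tilde f=f+(\eta/N^2)h$ is a rank function with $\bigtriangleup\tilde f\geq 2\eta/N^2>0$, and the displayed inequality $f(S)-\tilde f(S)\leq\eta$ holds --- trivially, since your left-hand side is nonpositive. But that triviality is the problem: you have perturbed $f$ in the opposite direction from what the theorem is for. The paper's proof sets $\tilde f(S)=f(S)-\gamma g(S)$ with $g(S)=|S^c|^2$ on nonempty sets, so that $\tilde f\leq f$, $P_{\tilde f}\subseteq P_f$, and (since $g(\mathcal N)=0$) the normalization $\tilde f(\mathcal N)=f(\mathcal N)=1$ is preserved. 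The sentence immediately after the theorem states that ``the new polymatroid lies in the given one,'' and the rest of the paper replaces $P_f$ by such an inner approximation so that every quantized allocation remains feasible for the original resource. Your $\tilde f=f+ch\geq f$ gives an outer approximation with $\tilde f(\mathcal N)=1+\eta$: running the mechanism over $P_{\tilde f}$ could allocate resource that does not exist. So you have verified the one-sided inequality as literally written, but not the two-sided bound $0\leq f(S)-\tilde f(S)\leq\eta$ that the statement is implicitly asserting and that its use requires.

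Your stated reason for avoiding the subtractive route is also based on a misreading of what should be subtracted. You worry that subtracting a function that is \emph{increasing} in $|S|$ destroys monotonicity where $f$ has zero marginals --- true, but the right perturbation is a nonnegative supermodular function that is \emph{decreasing} in $|S|$, such as $|S^c|^2$. Then $\tilde f(S\cup\{i\})-\tilde f(S)=\bigl(f(S\cup\{i\})-f(S)\bigr)+\gamma\bigl(g(S)-g(S\cup\{i\})\bigr)\geq 0$, so monotonicity survives, while $D_{\tilde f}=D_f+\gamma D_{-g}=D_f+2\gamma>0$ for nonempty $S$, and $0\leq f-\tilde f\leq\gamma(N-1)^2$. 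The only genuine delicacy in that route is the empty set (one must set $\tilde f(\emptyset)=0$ by hand and then check submodularity and nonnegativity on singletons, which requires a little extra care); that is a detail worth supplying, not a reason to abandon the construction. I would rework the proof to perturb downward.
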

\begin{proof}
	Let $g$ be a normalized set function defined as $g(S) = |S^c|^2$ for any nonempty set $S$. Then let $\tilde{f}(\emptyset) = 0$ and $\tilde{f}(S) = f(S) - \gamma g(S)$ for any nonempty set $S$. With a small value of $\gamma$, we can make sure the difference between $f$ and $\tilde{f}$ is less than $\eta$. Moreover, since $-g(S)$ is submodular with a nonnegative minimum distance, we can show $\tilde{f}$ is a rank function with $\bigtriangleup \tilde{f} > 0$. The detailed proof is omitted due to space consideration. 
\end{proof}

The above theorem shows that given any polymatroid, we can find another polymatroid with positive minimum distance such that the new polymatroid lies in the given one and the discrepancy could be arbitrarily small. Due to this reason, in the remainder of the paper, we only consider polymatroids constrained by set functions with positive minimum distance.\footnote{If this is not the case, then we can use the above construction and as will be evident it would have negligible impact on the resulting efficiency of our mechanisms.}

\subsection{Social Welfare Maximization Problem}
We consider a model where a set of $N$ agents are competing for a resource constrained by a polymatroid environment. Denote the set of agents as $\mathcal{N}= \{1,\cdots, N\}$. A feasible allocation ${\bf x} =\{x_n\}_{n=1}^N$ is then any point that lies in a polymatroid associated with a set function $f$ over $\mathcal{N}$. In particular, the agents in any subset $S$ could obtain a total allocation of at most $f(S)$, i.e., $\sum_{n \in S}x_n \leq f(S)$. Without loss of generality, we assume that $f(\mathcal{N}) = 1$. Moreover, each individual's allocation must be non-negative and hence  ${\bf x} \in \mathbb{R}_{+}^N$.

Each agent is endowed with a utility function $u_n(x_n)$, which only depends on the agent's final allocation. Throughout this paper, these utility functions satisfy the following assumptions:

\begin{assumption}
Each agent's utility function is only known by that agent, i.e., it is private information.
\end{assumption}

\begin{assumption}
The utility functions $u_n$ are nonnegative, concave, and strictly increasing. 
\end{assumption}

\begin{assumption}
For all agents, the marginal utility per unit resource is bounded, that is, there exist positive constants $\alpha$ and $\beta$ such that for any agent $n$, $\beta < u_n'<  \alpha$. These bounds are public information.  
\end{assumption}

\begin{assumption}
The utility functions are in monetary units and each agent's quasilinear payoff is defined as $u_n(x_n)-p_n$, where $p_n$ is the payment for the allocated resource.
\end{assumption}

Assumption 1 results in the incentive issue: utilities are private information, each agent may lie about them. Assumption 2 and 3 show agents have ``elastic" resource requirements and the unit value for the resource lies within some range, which is reasonable in many practical settings. The last assumption is widely adopted in pricing mechanisms and the quasilinear structure makes each agent's strategy and the corresponding final allocation dependent on the payment rule in this game as well as the utility functions.

Under this setting, the goal of the allocator is to make the best use of the resource, i.e., to maximize the social welfare. This is given by solving the following optimization problem:
\begin{align}\label{main:1}
&\max \ \ \ \ \sum_{n=1}^N u_{n}(x_n)\\
\nonumber &\text{subject to} \ \ {\bf x}\in P_f= \{{\bf x} \in \mathbb{R}_{+}^N | \sum_{n\in S} x_n \leq f(S),  \forall S \in \mathcal{N}\}.
\end{align} 

Since the total sum constraint is always binding in polymatroids, we have the following result from \cite{groenevelt1991two}. We give a proof of this for completeness.
\begin{proposition}\label{prop:1}
The optimal solution to optimization problem (\ref{main:1}) lies on the dominant face of the polymatroid.
\end{proposition}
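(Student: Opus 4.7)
The plan is a short contradiction argument that leverages the two polymatroid lemmas already in hand together with strict monotonicity of the utilities (Assumption 2).

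Suppose for contradiction that an optimal solution $\mathbf{x}^*$ to (\ref{main:1}) satisfies $\sum_{n=1}^N x_n^* < f(\mathcal{N})$, i.e.\ $\mathbf{x}^*$ is strictly interior to $P_f$ with respect to the grand-coalition constraint. The first step is to invoke Lemma \ref{lemma:1}: the shifted set $P_f - \mathbf{x}^*$ is itself a polymatroid, with rank function $\tilde f(S) = \min_{T \supseteq S}\bigl(f(T) - \sum_{n \in T} x_n^*\bigr)$. In particular, $\tilde f(\mathcal{N}) = f(\mathcal{N}) - \sum_{n=1}^N x_n^* > 0$ by our contradiction hypothesis.

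Next, apply Lemma \ref{lemma:2.1} to the polymatroid $P_f - \mathbf{x}^*$: there exists $\mathbf{y} \in P_f - \mathbf{x}^*$ with $\mathbf{y} \in \mathbb{R}_+^N$ and $\sum_{n=1}^N y_n = \tilde f(\mathcal{N}) > 0$. Since $\mathbf{y}$ is componentwise nonnegative and its components sum to a positive number, at least one coordinate $y_k$ is strictly positive. By construction $\mathbf{x}^* + \mathbf{y} \in P_f$, so $\mathbf{x}^* + \mathbf{y}$ is a feasible allocation.

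Finally, by Assumption 2 each $u_n$ is strictly increasing, so $u_k(x_k^* + y_k) > u_k(x_k^*)$ and $u_n(x_n^* + y_n) \geq u_n(x_n^*)$ for every $n \neq k$. Summing yields $\sum_{n=1}^N u_n(x_n^* + y_n) > \sum_{n=1}^N u_n(x_n^*)$, contradicting the optimality of $\mathbf{x}^*$. Hence any optimal allocation must satisfy $\sum_{n=1}^N x_n^* = f(\mathcal{N})$ and therefore lies on $\mathcal{D}(P_f)$.

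There is no real obstacle here: the only subtle point is that Lemma \ref{lemma:1} and Lemma \ref{lemma:2.1} already do the heavy lifting, guaranteeing both that the ``room left'' after $\mathbf{x}^*$ is again a polymatroid and that this residual polymatroid contains a point whose coordinates sum to its full rank. Strict monotonicity of the utilities (not even concavity) then suffices to turn any such slack into a strict welfare improvement.
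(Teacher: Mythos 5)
Your proof is correct and follows essentially the same contradiction argument as the paper: form the residual polymatroid $P_f - \mathbf{x}^*$, find a nonzero feasible increment inside it, and use strict monotonicity of the utilities to contradict optimality. If anything, your version is slightly more careful than the paper's, since you use the correct rank function $\tilde f(S) = \min_{T \supseteq S}\bigl(f(T) - \sum_{n \in T} x_n^*\bigr)$ from Lemma~\ref{lemma:1} and explicitly justify the existence of a strictly positive coordinate via Lemma~\ref{lemma:2.1}.
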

\begin{proof}
We show this by means of contradiction. Suppose $\bf x^*$ is the optimal solution, but $\sum_{n=1}^N x^*_n < f(\mathcal{N})$. Then we can construct a new polymatroid $P_g$ by subtracting the optimal solution $\bf x^*$:
\begin{equation} 
	P_g = \{{\bf x} \in \mathbb{R}_{+}^N | \sum_{n\in S} x_n \leq f(S)-\sum_{n \in S} x^*_n,  \forall S \in \mathcal{N}\}.
\end{equation}
The new set function is $g(S) = f(S)-\sum_{n \in S} x^*_n$, which satisfies the properties for set function to define a polymatroid. Since $f(\mathcal{N}) -\sum_{n=1}^N x^*_n > 0 $, the polymatroid $P_g$ is not empty and so we can  find a nonzero vector $\bf x'$ inside. Moreover, each utility function is strictly increasing, hence $\sum_{n=1}^N u_n(x^*_n+x_n') >\sum_{n=1}^N u_n(x^*_n)$, which is a contradiction. This finishes the proof.
\end{proof}

\subsection{Vickrey-Clarke-Groves (VCG) Mechanism}
Next we review the VCG mechanism. In this mechanism, the allocator asks agents to report their utility functions and determines the allocation which maximizes the social welfare under the reported utilities. The VCG payments for given agent is then given by the difference between the total utility the other agents would have received if the given agent was not present and the total utility that they received based on the reported utilities.\footnote{This is the most common form of VCG payments known as the Clarke pivot rule, more generally changing agent $n$'s payment by any function that does not depend on the reported utility of agent $n$ is also valid.} In other words, each agent's payment equals the increase in the others' sum utility if she is absent. 

\begin{definition}\cite{nisan2007algorithmic}
A mechanism where players have private information is said to be \textit{dominant-strategy incentive-compatible} (DSIC) if it is a weakly-dominant strategy for every player to reveal his/her private information.
\end{definition}

The VCG mechanism is DSIC, and given that agents follow the dominant strategy of reporting their true utility, the mechanisms will be efficient, i.e. it will maximize the social welfare.

\subsection{Quantized VCG Mechanism }
Given a divisible resource, the utility function an agent submits in the VCG mechanism can be an arbitrary real valued function of the amount of resource obtained. As we have discussed, this can result in the excessive communication overhead. Next we discuss the approach in \cite{haoge2018} for limiting this overhead via quantization, when the resource is a single divisible resource, i.e., the resource constraint is given by $\sum_{n\in\mathcal N} x_n \leq 1$.  In this approach, the allocator partitions the whole resource into $M$ divisions equally and restricts agents to get an integer number of divisions. Essentially, this mechanism then runs a VCG mechanism for the quantized resource. Under this setting, each agent $n$ only needs to submit $M$ bids: $v_{n1}, \cdots, v_{nM}$, where $v_{nm}$ indicates agent $n$'s marginal utility from getting 
an $m$th additional unit of resource. This reduces the information an agent needs to submit from specifying an 
infinite dimensional function to specifying an $M$-dimensional vector.  For small values of $M$, this quantization also simplifies the optimization faced by the resource allocator as it can now determine the socially optimal allocation via a greedy algorithm instead of needing to solve a convex optimization problem.\footnote{Also note that using a greedy algorithm gives the exact social optimal for the quantized problem, while when solving the convex optimization, the solution will typically be with some $\epsilon$ specified by the algorithm used. For large $\epsilon$, this in turn could impact the incentive guarantee of VCG.}  Finally, since this is still a VCG mechanism, it it still DSIC and implements the social optimal outcome with the quantized resource. However there is a loss in efficiency due to the quantization of the resource.  In \cite{haoge2018}, tight bounds on this loss are given.  Specifically, the efficiency (defined as the ratio of the welfare with $M$ divisions to the optimal welfare without quantization) is shown to always be no less than   
\begin{equation}\label{eq:old}
\frac{M}{M+N-1}
\end{equation}
where $N$ is the number of agents. Here we extend this to polymatroid environments.

\section{Quantizing VCG for Polymatroids }
In this section, we give two toy examples that show blindly partitioning the resource into  devisions may result in a large efficiency loss or the failure of greedy algorithm for determining the assignment.

\begin{example}
	Consider two agents competing for the resource, where the feasible region is a polymatroid $P_f = \{(x_1,x_2) \in \mathbb{R}^2_+| x_1 \leq 0.6 , x_2 \leq 0.6, x_1 + x_2 \leq 1\}$, their utilities are $u_1(x_1) = 2x_1$ and $u_2(x_2) = x_2$, respectively.
\end{example}

For the setting in Example 1, suppose we partition the total resource into 3 divisions equally ($M = 3$), so that the amount of each division is $\frac{1}{3}$. The feasible region for the number of divisions each agent can get then becomes:\footnote{Here we are assuming that any quantized allocation has to respect the original constrains in $P_f$; hence, for example, agent 1 can not receive two divisions since $2/3 > 0.6$.}
$$
P_{\tilde{f}} = \{(y_1,y_2) \in \mathbb{N}^2| y_1 \leq 1 , y_2 \leq 1, y_1 + y_2 \leq 3\}.
$$
Observe that here the constraint function is no longer submodular and the sum constraint in  $P_{\tilde{f}}$ is never tight. In particular note that the maximum number of divisions allocated is now 2 ($y_1 +y_2 \leq 2$), which means that $1/3$ of the total resource is never allocated. For the given utilities this results in an efficiency of $0.63$, which is lower than the value of $0.75$ given by the bound in (\ref{eq:old}).\footnote{Moreover this poor efficiency  is obtained with linear utilities, which lead to no loss in the setting in \cite{haoge2018}.}  By changing the parameters in this example, even lower efficiencies are possible. 

\begin{example}
	Consider the following polymatroid as the feasible region: $P_f =\{ {\bf x} \in \mathbb{R}^3_+| x_1 \leq 0.7, x_2\leq 0.7, x_3\leq 0.7, x_1 +x_2 \leq 0.9, x_2+x_3 \leq 0.9, x_1+x_3 \leq 0.9, x_1+x_2+x_3 \leq 1  \}$, the utility functions for the agents are $u_1(x_1) = 1.2 x_1$, $u_2(x_2) = 1.1x_2$, $u_3(x_3) = x_3$.
\end{example}

Similarly, if we partition the resource in Example 2 into 3 divisions, then the feasible region for number of divisions each agent can get is:
\begin{align*}
	P_{\tilde{f}} =  & \{(y_1,y_2, y_3) \in \mathbb{N}^3| y_1 \leq 2 , y_2 \leq 2,y_3 \leq 2, y_1 + y_2 \leq 2,\\
	& y_1+y_3 \leq 2, y_2+y_3 \leq 2, y_1+y_2+y_3 \leq 3\}.
\end{align*}
In this case ${\bf y} = (1,1,1)$ is a feasible solution and the corresponding allocation is ${\bf x} = (\frac{1}{3}, \frac{1}{3}, \frac{1}{3})$. In this case, all of the resource is allocated and the social welfare is $1.1$. However, if we use the greedy algorithm to find the optimal allocation given the true marginal valuations, the solution and the corresponding allocation are ${\bf y}^* = (2,0,0)$ and ${\bf x}^* = (\frac{2}{3},0,0)$, respectively. This results in a social welfare of $0.8$ which is less than the value of $1.1$ obtained by the feasible solution $(1,1,1)$. In other words the greedy algorithm no longer gives the optimal quantized allocation. The issue again is that $P_{\tilde{f}}$ is not a polymatroid.

These two examples illustrate that in a polymatroid environment more care is needed when quantizing the resource.  In both cases, the issue is that region resulting from quantization is no longer a polymatroid. Next we discuss an approach  to ensure that this does not occur by ensuring that the region $P_{\tilde{f}}$ obtained after quantization is always an integral polymatroid.  

\subsection{Integral Polymatroid Construction}
Given a real-valued polymatroid, $P_f$,we next show that if a large enough  number of partitions are used, then the issues in the previous two examples do not arise. The construction for doing this is given in Algorithm \ref{integer-valued}. This algorithm specifies an integral set $P_{\tilde{f}}$, which indicates the number of quantized partitions for each agent.

\begin{algorithm}[h]
	\caption{Resource Partition}
	\begin{algorithmic}\label{integer-valued}
		\STATE \textbf{Inputs:} Polymatroid $P_f$ associated with set function $f$, which has the minimum distance $\triangle f$.
		\STATE	Choose integer $M$ such that $ M \geq \frac{2}{\triangle f}$.
		\FOR {$S \subseteq \mathcal{N}$ }
		\STATE {$\tilde{f}(S) = \lfloor f(S)M \rfloor$;}
		\ENDFOR 
	\end{algorithmic}
\end{algorithm}

We next give several properties of the set $P_{\tilde{f}}$ given by Algorithm 1.  First, it is straightforward to see the following lemma, which states that any constraint in $P_{\tilde{f}}$ is within $\frac{1}{M}$ of the corresponding constraint in $P_f$ and that the sum constraints are the same.
\begin{lemma}
	For any $S \subseteq \mathcal{N}$, $0 \leq f(S) - \frac{\tilde{f}(S)}{M} < \frac{1}{M}$; further, $\frac{\tilde{f}(\mathcal N)}{M} = f(\mathcal N)=1$.
\end{lemma}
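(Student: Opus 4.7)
The plan is to simply unwind the definition of $\tilde{f}$ given in Algorithm 1, namely $\tilde{f}(S) = \lfloor f(S) M \rfloor$, and then invoke two elementary facts: the standard two-sided bound for the floor function and the normalization assumption $f(\mathcal{N}) = 1$ stated in the social welfare model. No polymatroid or submodularity properties are needed; the claim is just a statement about how close $\lfloor f(S)M\rfloor/M$ is to $f(S)$.

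First I would fix an arbitrary $S \subseteq \mathcal{N}$ and write $\tilde{f}(S)/M = \lfloor f(S)M \rfloor / M$. Applying the defining property of the floor, $y - 1 < \lfloor y \rfloor \leq y$ with $y = f(S)M$, and then dividing through by $M > 0$, yields
\begin{equation*}
f(S) - \frac{1}{M} < \frac{\tilde{f}(S)}{M} \leq f(S),
\end{equation*}
which after rearranging is exactly $0 \leq f(S) - \tilde{f}(S)/M < 1/M$. This handles the first part for every $S$ uniformly.

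For the second part, I would use the normalization $f(\mathcal{N}) = 1$ adopted in the social welfare setup. Then $f(\mathcal{N})M = M$ is already an integer, so $\tilde{f}(\mathcal{N}) = \lfloor M \rfloor = M$ and consequently $\tilde{f}(\mathcal{N})/M = 1 = f(\mathcal{N})$. There is no real obstacle here; the only thing worth flagging is that the algorithm's choice $M \geq 2/\triangle f$ plays no role in this particular lemma — it will be needed later to secure submodularity of $\tilde{f}$ — so I would keep the proof tight and not invoke it.
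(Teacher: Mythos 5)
Your proof is correct and is exactly the argument the paper has in mind: the paper states this lemma without proof as "straightforward to see," and the intended reasoning is precisely the floor bound $y-1 < \lfloor y\rfloor \le y$ applied to $y = f(S)M$ together with the normalization $f(\mathcal{N})=1$. Your observation that the condition $M \ge 2/\triangle f$ is irrelevant here is also accurate.
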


Next we show that the constraints in $P_{\tilde{f}}$ always allow any set of users to have at least two partitions. (Note this is not the case for Example 1 in the previous section when $M=3$). 

\begin{proposition}\label{prop:2}
	For any non-empty subset $S \subseteq \mathcal{N}$, $\tilde{f}(S) \geq 2$.
\end{proposition}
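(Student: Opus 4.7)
The plan is to reduce the claim $\tilde{f}(S)\geq 2$ to the inequality $f(S)\geq \triangle f$, and then prove that lower bound using the definition of minimum distance together with monotonicity.

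First I would unpack $\tilde f(S)=\lfloor f(S)M\rfloor$: this is at least $2$ exactly when $f(S)M\geq 2$. Since Algorithm~1 chooses $M\geq 2/\triangle f$, we have $2/M\leq\triangle f$, so it suffices to prove that $f(S)\geq\triangle f$ for every non-empty $S\subseteq\mathcal N$. Given monotonicity of $f$, $f(S)\geq f(\{i\})$ for any $i\in S$, so the whole statement collapses to the singleton case: show that $f(\{i\})\geq\triangle f$ for every $i\in\mathcal N$ (assuming $N\geq 2$; the $N=1$ case is trivial since the only nonempty subset is $\mathcal N$ and $\tilde f(\mathcal N)=M\geq 2$).

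Next I would prove the singleton bound. Pick any $j\in\mathcal N\setminus\{i\}$. Taking $T=\emptyset$ in the definition of $\triangle f$ gives
\begin{equation*}
f(\{i\})+f(\{j\})-f(\emptyset)-f(\{i,j\})\geq \triangle f,
\end{equation*}
and since $f$ is normalized this is $f(\{i\})+f(\{j\})-f(\{i,j\})\geq \triangle f$. Monotonicity yields $f(\{i,j\})\geq f(\{j\})$, so $f(\{j\})-f(\{i,j\})\leq 0$, and rearranging gives
\begin{equation*}
f(\{i\})\geq \triangle f+\bigl(f(\{i,j\})-f(\{j\})\bigr)\geq \triangle f.
\end{equation*}

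Putting everything together, for any non-empty $S$, picking $i\in S$ gives $f(S)\geq f(\{i\})\geq \triangle f\geq 2/M$, so $\tilde f(S)=\lfloor f(S)M\rfloor\geq 2$. The one place that requires care is the singleton step: a naive chain using submodularity alone only yields $f(\{i\})+f(\{j\})\geq f(\{i,j\})$, which is not enough; the correct move is to combine the $\triangle f$ lower bound with the monotonicity inequality $f(\{i,j\})\geq f(\{j\})$, so that the $f(\{j\})$ terms can be absorbed without loss. Once that step is handled, the rest is routine application of monotonicity and the choice of $M$.
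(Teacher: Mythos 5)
Your proof is correct and takes essentially the same route as the paper's: both reduce to the singleton bound $f(\{i\})\geq\bigtriangleup f$ via monotonicity of $f$, and both obtain that bound by applying the definition of the minimum distance to the pair $\{i\},\{j\}$ over the base set $\emptyset$ and absorbing $f(\{j\})-f(\{i,j\})\leq 0$. Your explicit rearrangement and the $N=1$ aside are just a more detailed rendering of the paper's one-line chain $\bigtriangleup f\leq f(\{i\})+f(\{j\})-f(\emptyset)-f(\{i,j\})\leq f(\{i\})$.
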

\begin{proof}
	First, according to the definition of the minimum distance, for any agent $i$ and $j$, we have:
	$$
	\bigtriangleup f \leq  f(\{i\}) + f(\{j\}) -f(\emptyset) -f(\{i,j\})\leq f(i).
	$$
	Hence, for any nonempty subset $S$, we can find $i\in S$ and
	\begin{equation*}
		\tilde{f}(S) = \lfloor f(S)\frac{2}{\bigtriangleup f}\rfloor \geq\lfloor f(\{i\})\frac{2}{\bigtriangleup f}\rfloor \geq 2.
	\end{equation*}
\end{proof}

Finally we show that $P_{\tilde{f}}$ is an integral polymatroid.

\begin{theorem}
	The polyhedron, $P_{\tilde{f}}$, associated with $\tilde{f}$ given by Algorithm 1 is an integral polymatroid.
\end{theorem}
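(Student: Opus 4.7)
The plan is to verify directly that $\tilde f$ satisfies the four properties characterizing the rank function of an integral polymatroid: normalized, monotone, integer-valued, and submodular. The first three are essentially immediate. Since $f(\emptyset)=0$ we get $\tilde f(\emptyset)=\lfloor 0 \rfloor = 0$; integer-valuedness is built into the floor; and monotonicity follows because $S\subseteq T$ implies $f(S)\le f(T)$ and the floor function preserves this ordering. So the work is entirely in the submodularity step.

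For submodularity I would use the equivalent ``diminishing marginal returns'' form, i.e., show
\[
\tilde f(S\cup\{i\}) + \tilde f(S\cup\{j\}) \;\ge\; \tilde f(S) + \tilde f(S\cup\{i,j\})
\]
for every $S\subseteq\mathcal N$ and distinct $i,j\in\mathcal N\setminus S$. This is the natural place to plug in the minimum distance $\triangle f$, because the definition of $\triangle f$ is stated precisely in terms of $D_f(S\cup\{i\},S\cup\{j\})$, which equals the submodular slack on the left-minus-right side above. From the hypothesis $M\ge 2/\triangle f$ I get
\[
M\bigl[f(S\cup\{i\}) + f(S\cup\{j\}) - f(S) - f(S\cup\{i,j\})\bigr] \;\ge\; M\,\triangle f \;\ge\; 2.
\]

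The remaining obstacle, which is really the whole point of the construction, is controlling the error introduced by the floor. Using $x-1 < \lfloor x\rfloor \le x$ on each of the four terms,
\[
\tilde f(S\cup\{i\}) + \tilde f(S\cup\{j\}) \;>\; M\bigl[f(S\cup\{i\})+f(S\cup\{j\})\bigr] - 2,
\]
\[
\tilde f(S) + \tilde f(S\cup\{i,j\}) \;\le\; M\bigl[f(S)+f(S\cup\{i,j\})\bigr].
\]
Subtracting and invoking the inequality $M\,\triangle f \ge 2$ from the previous step yields that the difference is strictly greater than $0$; since both sides are integers, the desired $\ge$ submodular inequality holds. The threshold $M\ge 2/\triangle f$ is chosen precisely so that the submodular gap of $f$, amplified by $M$, dominates the worst-case loss of $2$ coming from two independent floor operations.

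I would close by collecting these four properties to conclude that $\tilde f$ is a normalized, monotone, integer-valued submodular function on $2^{\mathcal N}$, and therefore $P_{\tilde f}$ is an integral polymatroid with rank function $\tilde f$. The only nontrivial step is submodularity, and the essential trick is replacing the general submodular inequality (which involves arbitrary $S,T$) with its equivalent local form, which lines up directly with the definition of $\triangle f$.
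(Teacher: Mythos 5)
Your proposal is correct and follows essentially the same route as the paper: normalization, integrality, and monotonicity are immediate from the floor construction, and submodularity is verified in the local form $\tilde f(S\cup\{i\})+\tilde f(S\cup\{j\})\ge \tilde f(S)+\tilde f(S\cup\{i,j\})$ by bounding the floor errors by $2$ and absorbing them with $M\triangle f\ge 2$. The paper's displayed chain of inequalities is exactly this argument.
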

\begin{proof}
	First, we know $\tilde{f}(\emptyset) = \lfloor f(\emptyset )M\rfloor = 0$, so $\tilde{f}$ is normalized.
	
	For monotonicity, consider two sets $S$ and $T$ such that $S \subseteq T$, since $f$ is monotonically increasing, then
	\begin{equation}
		\tilde{f}(S) = \lfloor f(S )M\rfloor \leq \lfloor f(T )M\rfloor = \tilde{f}(T).
	\end{equation}
	
	Moreover, we have 
	\begin{align}
		\nonumber &\tilde{f}(S+ \{e_1\})+\tilde{f}(S+\{e_2\}) \\
		\nonumber\geq &Mf(S+ \{e_1\})-1 + Mf(S+\{e_2\}) -1\\
		\nonumber\geq & M(\bigtriangleup f +f(S) +f(S+\{e_1\} +\{e_2\} )) -2\\
		\nonumber\geq & M\bigtriangleup f -2 +\tilde{f}(S) + \tilde{f}(S+\{e_1\} +\{e_2\} )\\
		\geq & \tilde{f}(S) + \tilde{f}(S+\{e_1\} +\{e_2\} ),
	\end{align}
	where we used the definition of minimum distance and the fact that $M\geq  \frac{2}{\bigtriangleup f} $, respectively, in the last two steps. Therefore, $\tilde{f}$ is submodular and $P_{\tilde{f}}$ is an integral polymatroid.
\end{proof}

Since $P_{\bar{f}}$ is a polymatroid, there always exists an allocation on its dominant face and  by construction $\tilde{f}(\mathcal N) = M$, so that any such allocation will utilize all of the available resource.

\subsection{Quantized VCG Mechanism}
Given the feasible region $P_f$, we define a {\it quantized VCG mechanism} as a mechanisms in which  the agents are only allocated an integer number divisions, where the set of allocations must lie in the integral polymatroid $P_{\tilde{f}}$ constructed in Algorithm 1. 

Under this setting, the optimal allocation is ${\bf x}^* = \frac{1}{M}  {\bf y}^* $, where ${\bf y}^*$ is the optional allocation of divisions given by the following integer optimization problem:
\begin{align}\label{5}
	&\text{maximize} \ \ \ \ \sum_{n=1}^N u_{n}(\frac{y_n}{M})\\
	\nonumber &\text{subject to} \ \ \ \ {\bf y} \in P_{\tilde{f}}.
\end{align}
\begin{corollary}\label{lemma:2}
	The maximizer ${\bf y^*}$ lies on the dominant face of $P_{\tilde{f}}$.
\end{corollary}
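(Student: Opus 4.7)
The plan is to mirror the proof of Proposition~\ref{prop:1}, since Corollary~\ref{lemma:2} is essentially its integer-valued analogue. I would proceed by contradiction: assume that ${\bf y}^*$ is an optimal solution of the integer program (\ref{5}) but lies strictly below the dominant face, i.e., $\sum_{n=1}^N y_n^* < \tilde f(\mathcal N) = M$. The goal is to exhibit a feasible perturbation that strictly increases the objective, contradicting optimality.

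To construct the perturbation, I would apply Lemma~\ref{lemma:1} to the integral polymatroid $P_{\tilde f}$ at the point ${\bf y}^*$, forming the shifted polyhedron
\[
P_{\tilde f} - {\bf y}^* = \{{\bf y}\in \mathbb R_+^N \mid \textstyle\sum_{n\in S}y_n \leq \tilde f(S) - \sum_{n\in S}y_n^*, \ \forall S\subseteq \mathcal N\},
\]
which by Lemma~\ref{lemma:1} is a polymatroid with rank function $\tilde g(S) = \min_{T:S\subseteq T}(\tilde f(T)-\sum_{n\in T}y_n^*)$. Since both $\tilde f$ and ${\bf y}^*$ are integer-valued, $\tilde g$ is integer-valued as well, so $P_{\tilde f} - {\bf y}^*$ is in fact an integral polymatroid. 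Crucially, $\tilde g(\mathcal N) = \tilde f(\mathcal N) - \sum_n y_n^* > 0$ by our contradiction hypothesis, so the shifted polymatroid is nontrivial.

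By Lemma~\ref{lemma:2.1} (applied in the integer setting, where corner points are integer valued), there exists a point ${\bf y}'$ on the dominant face of $P_{\tilde f}-{\bf y}^*$, which in particular is a nonzero vector in $\mathbb N^N$ with $\sum_n y_n' = \tilde g(\mathcal N)>0$. Then ${\bf y}^* + {\bf y}'$ lies in $P_{\tilde f}$ by construction, and because every $u_n$ is strictly increasing (Assumption~2), we get $\sum_n u_n((y_n^*+y_n')/M) > \sum_n u_n(y_n^*/M)$, contradicting optimality of ${\bf y}^*$.

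The main thing to verify carefully is the last paragraph's claim that integrality is preserved when passing to the shifted polymatroid, so that the ${\bf y}'$ produced by Lemma~\ref{lemma:2.1} can be taken in $\mathbb N^N$; this is immediate once one observes that $\tilde g$ inherits integer values from $\tilde f$ and ${\bf y}^*$. The remaining steps are verbatim adaptations of the argument in Proposition~\ref{prop:1}, so I expect the proof to be quite short.
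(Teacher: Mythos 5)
Your argument is exactly the one the paper intends: the paper's ``proof'' of Corollary~\ref{lemma:2} simply defers to the contradiction argument of Proposition~\ref{prop:1}, and your proposal carries out that adaptation faithfully (shift by ${\bf y}^*$ via Lemma~\ref{lemma:1}, find a nonzero point in the residual region, and invoke strict monotonicity of the $u_n$). Your extra care in checking that the shifted rank function $\tilde g$ is integer-valued, so that a nonzero \emph{integer} perturbation ${\bf y}'$ exists, is precisely the one detail the discrete setting adds, and it is handled correctly.
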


This is a straightforward generalization of Proposition 1 to the discrete setting and can be proven in a similar manner. Note also from the discussion above, such a solution will utilize all of the resource. 

To determine the allocation, the resource allocator essentially runs a VCG mechanism for the possible allocations in $P_{\tilde{f}}$. This requires each agent to submit its valuation for each possible bundle of units it can obtain. Since each agent $n$ can get at most $\tilde{f}(\{n\})$ units, agent $n$ only needs to report $\tilde{f}(\{n\})$ values: $u_n(\frac{y_n}{M})$, $1\leq y_n \leq \tilde{f}(\{n\})$. Equivalently, agent $n$ can submit the marginal utility instead. Note that the range of the marginal values will be smaller than the range of the actual utility values and so in some sense reporting the marginal values also reduces the communication cost. This will be made more precise in the next section, when we also consider quantizing the bids.  The marginal utility is denoted by $V_n = \{v_{n1},\cdots, v_{n\tilde{f}(\{n\})} \}$, where 
\begin{equation}
	v_{nm}  =  u_n(\frac{m}{M}) - u_n(\frac{m-1}{M}), \ \ \ \ 1 \leq m \leq \tilde{f}(\{n\}.
\end{equation}
Furthermore, under Assumption 3, all utility functions have bounded marginal valuations and hence each bid $v_{nm}$ is lower bounded by $\frac{\beta}{M}$. To further reduce the range of bids, we can require agent $n$ to submit the surrogate marginal utility $\hat{v}_{nm} = v_{nm} - \frac{\beta}{M}$ so that:
\begin{equation}
	u_n(\frac{y_n}{M}) = \sum_{m=1}^{y_n} \hat{v}_{nm} + \frac{\beta y_n}{M}.
\end{equation}
Notice that both $v_{nm} $ and $\hat{v}_{nm} $ are non-increasing in $m$ due to the concavity of $u_n$.

The detailed mechanism is as follows:
\begin{itemize}
	\item{Determine the number of partitions and partition the resource into $M$ divisions equally using Algorithm \ref{integer-valued}. The feasible region for number of units is $P_{\tilde{f}}$}.
	\item{Solicit and accept sealed surrogate marginal utility vectors $\hat{{V}}_n = \{\hat{v}_{n1}, \cdots, v_{n\tilde{f}(\{n\})} \}$}.
	\item{Determine the allocation of divisions that optimize (\ref{5}) where the objective is the utility given by the given bids $\hat{V}_n$. The final allocation agent $n$ gets is $y^*_n$. The sum of bids picked is denoted by 
		\begin{equation}
			\text{OPT}( \hat{V}) = \sum_{n=1}^N \sum_{m=1}^{y_n^*} \hat{v}_{nm}.
		\end{equation}
	}
	\item{Set price $ p_n$ for agent $n$ as:
		\begin{align*}
			p_n=\text{OPT}(0,\hat{V}_{-n})-(\text{OPT}({\hat{V}})-\sum_{m=1}^{y^*_n}\hat{v}_{nm}) + \frac{y^*_n\beta}{M}.
	\end{align*}}
\end{itemize}
We next give a greedy algorithm in Algorithm \ref{greedy} to determine the allocation in (\ref{5}).
\begin{algorithm}[h]
	\caption{Greedy allocation algorithm}
	\begin{algorithmic}\label{greedy}
		\STATE \textbf{Inputs:} Integer-valued polymatroid $P_{\tilde{f}}$; $N$ descending list: $\hat{{V}}_1, \cdots,\hat{{V}}_N$, the first element of $\hat{{ V}}_n$ is denoted by $\hat{{ V}}_n[0].$
		\STATE \textbf{Initialization:} Set ${\bf y} = {\bf 0} $, $\mathcal{N} = \{1,\cdots,N\}, $$J({\bf y}) = \mathcal{N}$.
		\WHILE {$J({\bf y})$ is not empty}
		\STATE {$n = \arg\max_{i \in \mathcal{N}}\{\hat{{ V}}_i[0]\}$ (break ties arbitrarily);}
		\STATE {$y_n$ $+=$ $1$;}
		\STATE {Remove the first element from $\hat{{ V}}_n$;}
		\IF {not $\hat{V}_n$}
		\STATE {Remove $n$ from $\mathcal{N}$;}
		\ENDIF
		\STATE {$J({\bf y}) = \{i| (y_i+1, y_{-i}) \in P_{\tilde{f}}, i\in \mathcal{N} \}$;}
		\ENDWHILE 
	\end{algorithmic}
\end{algorithm}
With this algorithm, the resources are allocated unit by unit. At each pass through, the algorithm maintains a list $J(\mathbf{y})$ of feasible agents who can still receive an additional unit of resource. It then assign the next unit to the agent from this set with the largest surrogate marginal utility.

Based on \cite{edmonds1971matroids,rado1957note,edmonds2003submodular}, \cite{glebov1973one} establishes the following theorem:
\begin{theorem} \label{thm:3}
	Suppose $\mathcal{J} \subseteq \mathbb{N}^N$ is a down-monotone and finite family of integer nonnegative vectors. The greedy algorithm optimizes a separable concave function over $\mathcal J$ if and only if $\mathcal J$ is an integral polymatroid.
\end{theorem}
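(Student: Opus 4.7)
The plan is to prove both implications separately, leveraging standard polymatroid exchange machinery.

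For the sufficient direction---$\mathcal{J}$ being an integral polymatroid implies greedy optimality---I would use an exchange argument. Write the separable concave objective as $F(\mathbf{y})=\sum_{n}g_n(y_n)$; concavity makes the marginal increments $\Delta_n(m):=g_n(m)-g_n(m-1)$ non-increasing in $m$. Let $\mathbf{y}^G$ denote the greedy output and $\mathbf{y}^*$ an optimizer; by Corollary~\ref{lemma:2}, both lie on the dominant face of $P_{\tilde{f}}$. The key ingredient is the integral exchange property for polymatroids: for any two dominant-face points $\mathbf{y},\mathbf{y}'$ and index $i$ with $y_i<y'_i$, there exists $j$ with $y_j>y'_j$ such that $\mathbf{y}'+e_i-e_j\in P_{\tilde{f}}$. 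I would iterate such swaps to transform $\mathbf{y}^*$ into $\mathbf{y}^G$. At the first stage where they differ, the greedy rule chose the largest feasible marginal; combined with non-increasing marginals, each swap leaves $F$ non-decreasing, yielding $F(\mathbf{y}^G)\geq F(\mathbf{y}^*)$.

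For the necessary direction, I would define the candidate rank function $\tilde{f}(S):=\max\{\sum_{n\in S}y_n:\mathbf{y}\in\mathcal{J}\}$ and verify that $\tilde f$ is normalized, monotone, and submodular, and that $\mathcal{J}=P_{\tilde f}$. Normalization and monotonicity follow from down-monotonicity and the definition; $\mathcal{J}\subseteq P_{\tilde f}$ is immediate, and the reverse containment uses down-monotonicity together with greedy-produced witnesses that achieve each $\tilde f(S)$. The crux is proving submodularity $\tilde f(S)+\tilde f(T)\geq \tilde f(S\cap T)+\tilde f(S\cup T)$. To extract this from the greedy hypothesis, I would construct, for each fixed $(S,T)$, a piecewise-linear concave test function whose marginals are tiered: largest on $S\cap T$, next largest on $S\setminus T$ and $T\setminus S$ with a small separation to break ties in a controlled way, and negligible on $\mathcal{N}\setminus(S\cup T)$. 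Running greedy on this objective, the binding constraints successively pin down $\tilde f(S\cap T)$, $\tilde f(S)$, and $\tilde f(T)$; comparing the greedy value with the value attained by an explicit allocation that realizes $\tilde f(S\cap T)+\tilde f(S\cup T)$ forces the submodular inequality.

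The main obstacle I anticipate is the necessary direction, specifically calibrating the test-function marginals so that a single greedy run simultaneously exposes all four rank quantities and so that down-monotonicity of $\mathcal{J}$ properly bridges the gap between greedy-reachable vectors and the full polyhedron $P_{\tilde f}$. The sufficient direction, by contrast, reduces cleanly to a standard exchange template once Corollary~\ref{lemma:2} is invoked; the key verification there is that non-increasing marginals plus the greedy maximum-selection rule make each swap value-monotone, which is routine.
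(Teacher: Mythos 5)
First, a point of comparison: the paper does not prove this theorem at all --- it is quoted as a known result, attributed to Glebov (1973) building on Edmonds and Rado --- so there is no in-paper argument to match. Judged on its own terms, your sketch of the sufficient direction follows the standard exchange-argument template and is essentially sound, with one caveat: you invoke the dominant-face corollary to place both the greedy output and the optimizer on the dominant face, but that corollary relies on strictly increasing utilities. For an arbitrary separable concave objective the marginal increments $\Delta_n(m)$ may become negative, the optimizer need not be a base, and greedy must be understood as stopping once no feasible increment has positive marginal value. The standard fix is an induction showing that after each greedy step some optimal solution componentwise dominates the greedy partial solution, using the exchange property only to repair feasibility; you should also be explicit that the swap $\mathbf{y}'+e_i-e_j$ changes the objective by $\Delta_i(y'_i+1)-\Delta_j(y'_j)$ and say which greedy comparison guarantees this is nonnegative.

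The necessary direction is where the real content of the theorem lives, and there your proposal remains a plan rather than a proof. Defining $\tilde f(S)=\max\{\sum_{n\in S}y_n:\mathbf{y}\in\mathcal{J}\}$ is the right start, but the mechanism you describe --- a single tiered test function whose ``binding constraints successively pin down'' $\tilde f(S\cap T)$, $\tilde f(S)$ and $\tilde f(T)$, after which comparison ``forces'' submodularity --- is not a derivation, and as stated it is not even clear the comparison points in the right direction. The classical route is contrapositive and uses only linear (hence concave) objectives: with weights $2$ on $S\cap T$, $1$ on $(S\cup T)\setminus(S\cap T)$ and $0$ elsewhere, every feasible point has value $\sum_{n\in S\cap T}y_n+\sum_{n\in S\cup T}y_n\le \tilde f(S\cap T)+\tilde f(S\cup T)$, while suitable combinations of maximizers for $S$ and $T$ witness the other side; if submodularity fails one then exhibits an instance on which greedy is strictly suboptimal. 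You also still owe the containment $P_{\tilde f}\cap\mathbb{N}^N\subseteq\mathcal{J}$, which does not follow from down-monotonicity plus ``greedy-produced witnesses'' alone --- reaching an arbitrary integer point below a base already requires the exchange property you are trying to establish. Until these steps are carried out, the converse direction is a genuine gap.
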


Using this result, we can show the following theorem:
\begin{theorem}\label{thm:4}
	Given the surrogate marginal utility vectors $\hat{V}$, for any set of concave and non-decreasing utility functions $\bf {u}$ such that for each agent $n$, $u_n(\frac{y_n}{M}) = \sum_{m=1}^{y_n} \hat{v}_{nm} + \frac{\beta y_n}{M}$, then ${\bf y}^*({\hat{ V}})$ given by Algorithm \ref{greedy} is the optimizer to problem (\ref{5}) and the optimal outcome is $\text{OPT}( \hat{V})+ \beta$, i.e., 
	\begin{equation}\label{eq.17}
		{\bf y}^*({\hat{V}}) =\arg\max_{{\bf y} \in P_{\tilde{f}}}\sum_{n=1}^N {u}_{n}(\frac{y_n}{M}),
	\end{equation}
	\begin{equation}\label{eq:14}
		\max_{{\bf y} \in P_{\tilde{f}}}\sum_{n=1}^N {u}_{n}(\frac{y_n}{M}) =\sum_{n=1}^N {u}_{n}(\frac{y^*_n({\hat{V}})}{M}) =\text{OPT}( \hat{V})+ \beta  .
	\end{equation}
\end{theorem}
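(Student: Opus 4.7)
The plan is to reduce Theorem \ref{thm:4} to a direct application of Theorem \ref{thm:3} once the objective in (\ref{5}) is put into separable concave form. First I would substitute the hypothesized decomposition $u_n(y_n/M) = \sum_{m=1}^{y_n} \hat{v}_{nm} + \beta y_n/M$ into the objective to obtain
\begin{equation*}
\sum_{n=1}^N u_n\!\left(\frac{y_n}{M}\right) = \sum_{n=1}^N \sum_{m=1}^{y_n} \hat{v}_{nm} + \frac{\beta}{M}\sum_{n=1}^N y_n.
\end{equation*}
By Corollary \ref{lemma:2}, any maximizer of (\ref{5}) must lie on the dominant face of $P_{\tilde{f}}$, and $\tilde{f}(\mathcal N) = M$ since $f(\mathcal N)=1$; hence $\sum_n y_n = M$ on the candidate set, and the second term contributes the constant $\beta$. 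Maximizing (\ref{5}) is therefore equivalent to maximizing the purely separable surrogate $\sum_n g_n(y_n)$ with $g_n(y_n) := \sum_{m=1}^{y_n} \hat{v}_{nm}$, and the optimal value of (\ref{5}) equals the optimal value of the surrogate plus $\beta$.

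Next I would verify that each $g_n$ is discretely concave. Since $u_n$ is concave, the sequence $v_{nm} = u_n(m/M) - u_n((m-1)/M)$ is non-increasing in $m$, and so is $\hat{v}_{nm} = v_{nm} - \beta/M$; this is precisely the discrete concavity of $g_n$. Combined with the integral polymatroid property of $P_{\tilde{f}}$ already established, Theorem \ref{thm:3} guarantees that a greedy algorithm attains the maximum of $\sum_n g_n(y_n)$ over $P_{\tilde{f}}$. It remains to identify Algorithm \ref{greedy} as this greedy rule.

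To make that identification explicit I would point to three features of Algorithm \ref{greedy}. (i) The set $J(\mathbf{y})$ is, by construction, the set of agents that can still absorb an additional unit while keeping $\mathbf{y}\in P_{\tilde f}$, matching the feasible directions used in the greedy rule for integral polymatroids. (ii) Among these, Algorithm \ref{greedy} increments $y_n$ for the agent whose next marginal bid $\hat{v}_{n, y_n + 1}$, the current head of $\hat V_n$, is largest, which is exactly the greedy step for a separable concave objective. (iii) Termination occurs only when $J(\mathbf{y})$ is empty; by the polymatroid exchange property this cannot occur while $\sum_n y_n < \tilde{f}(\mathcal N) = M$, so the output lies on the dominant face and consumes all $M$ partitions. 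Plugging $\mathbf y^*(\hat V)$ back into the decomposition above gives $\sum_n u_n(y^*_n/M) = \mathrm{OPT}(\hat V) + \beta$, establishing both (\ref{eq.17}) and (\ref{eq:14}).

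The main obstacle is the seam between Theorem \ref{thm:3}, which is an abstract existence statement about \emph{some} greedy rule, and the specific bookkeeping of Algorithm \ref{greedy}. In particular, care is needed to confirm that removing an agent whose list $\hat V_n$ is exhausted does not terminate the algorithm prematurely (which holds because exhaustion of $\hat V_n$ coincides with saturating the singleton constraint $y_n \leq \tilde f(\{n\})$), and that the feasibility test against $P_{\tilde f}$ is equivalent, in the greedy context, to checking only tight subset constraints. Both points rest on the submodularity and monotonicity of $\tilde f$ proven earlier, but they are what make the equivalence with Glebov's abstract greedy procedure fully rigorous.
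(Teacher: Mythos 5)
Your proof is correct and follows essentially the same route as the paper's: reduce the objective to a separable concave surrogate by noting that the $\beta y_n/M$ terms sum to the constant $\beta$ on the dominant face (via Corollary~\ref{lemma:2}), then invoke Theorem~\ref{thm:3} for the greedy optimality and unwind the definition of $\text{OPT}(\hat{V})$ to get the value identity. The extra care you take in matching Algorithm~\ref{greedy} to the abstract greedy rule and in checking termination on the dominant face is detail the paper elides but does not change the argument.
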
	
\begin{proof}
	Combining Theorem \ref{thm:3} and Corollary \ref{lemma:2} yields (\ref{eq.17}). We can show (\ref{eq:14}) according to the definition of $\text{OPT}( \hat{V})$. The detailed proof is omitted here.
\end{proof}

This theorem shows the allocation determined by the greedy algorithm is the optimal solution to problem (\ref{5}) assuming the submitted surrogate marginal utilities are truthful. Next we show telling the truth is the (weakly) dominant strategy for each agent.

\begin{corollary}
	The quantized VCG mechanism is DSIC.
\end{corollary}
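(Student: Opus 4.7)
The plan is to adapt the standard VCG alignment argument to our quantized setting, leveraging the fact that by construction the mechanism selects the welfare-maximizing allocation over the integral polymatroid $P_{\tilde{f}}$ and charges the Clarke pivot payment (shifted by the public constant $y_n^*\beta/M$). The key observation is that once we substitute the definition of $p_n$ into agent $n$'s quasilinear payoff, the dependence of the payoff on $n$'s own report collapses into an expression that the mechanism is explicitly maximizing whenever the report is truthful.

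Concretely, I would fix an arbitrary agent $n$, fix arbitrary reports $\hat V_{-n}$ by the other agents, and let $\hat V_n^{\text{true}}$ denote agent $n$'s true surrogate marginal utilities, $\hat V_n$ an arbitrary report. Writing $\mathbf{y}^* = \mathbf{y}^*(\hat V_n, \hat V_{-n})$ for the allocation produced by Algorithm 2 under the reports, the relation $u_n(y_n^*/M) = \sum_{m=1}^{y_n^*}\hat v_{nm}^{\text{true}} + y_n^*\beta/M$ from the surrogate construction gives
\begin{align*}
u_n\!\left(\tfrac{y_n^*}{M}\right) - p_n
&= \sum_{m=1}^{y_n^*}\hat v_{nm}^{\text{true}} + \tfrac{y_n^*\beta}{M}
 - \mathrm{OPT}(0,\hat V_{-n}) + \mathrm{OPT}(\hat V_n,\hat V_{-n})\\
&\quad - \sum_{m=1}^{y_n^*}\hat v_{nm} - \tfrac{y_n^*\beta}{M}\\
&= \Bigl[\sum_{m=1}^{y_n^*}\hat v_{nm}^{\text{true}} + \sum_{i\neq n}\sum_{m=1}^{y_i^*}\hat v_{im}\Bigr] - \mathrm{OPT}(0,\hat V_{-n}),
\end{align*}
where the last equality uses $\mathrm{OPT}(\hat V_n,\hat V_{-n}) = \sum_{i}\sum_{m=1}^{y_i^*}\hat v_{im}$. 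The subtracted term $\mathrm{OPT}(0,\hat V_{-n})$ is independent of agent $n$'s report, so maximizing payoff is equivalent to maximizing the bracketed quantity over the choice of $\hat V_n$ (which only affects payoff through $\mathbf{y}^*$).

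Next I would observe that when agent $n$ reports truthfully, $\hat v_{nm} = \hat v_{nm}^{\text{true}}$, so the bracketed expression becomes $\sum_{i}\sum_{m=1}^{y_i^*}\hat v_{im}$, which by Theorem \ref{thm:4} (applied to the reports $(\hat V_n^{\text{true}},\hat V_{-n})$) is exactly $\mathrm{OPT}(\hat V_n^{\text{true}},\hat V_{-n})$, the maximum of $\sum_{i}\sum_{m=1}^{y_i}\hat v_{im}^{\text{true (for }n\text{)}}\cdot$ (with $i\neq n$ contributions being $\hat v_{im}$) over all $\mathbf{y}\in P_{\tilde f}$. For any misreport, the resulting allocation $\mathbf{y}^*$ is still some point of $P_{\tilde f}$, so the bracketed expression is at most this maximum. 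Thus truthful reporting is a weakly dominant strategy, establishing DSIC.

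I do not anticipate a serious obstacle: the entire argument is the textbook VCG/Clarke-pivot proof transported to our discrete polymatroid setting, and the additive shift $y_n^*\beta/M$ in the payment cancels cleanly with the $\beta/M$ offset built into the surrogate bids. The only substantive ingredient beyond bookkeeping is that Algorithm 2 actually computes the welfare-maximizer over $P_{\tilde f}$, which is precisely what Theorem \ref{thm:4} guarantees.
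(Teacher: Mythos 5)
Your proof is correct and follows essentially the same route as the paper: expand the quasilinear payoff using the payment rule, observe that the $y_n^*\beta/M$ term in the payment cancels the $\beta$-offset built into the surrogate bids so the payoff reduces to the standard Groves form (agent $n$'s true value plus others' reported values minus a report-independent constant), and invoke Theorem~\ref{thm:4} to conclude that truthful reporting maximizes this quantity over $P_{\tilde f}$. The paper's printed proof merely asserts this cancellation in one line, so your version is a faithful (and more explicit) rendering of the same argument.
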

\begin{proof}
	The difference between marginal utility and surrogate utility is more of an implementation issue. There is a one to one mapping between them and having agents send one or the other is equivalent. The last term $\frac{y_n^*\beta}{M}$ in the payment compensates the difference ensuring the quantized VCG mechanism is still DSIC. 
\end{proof}

\subsection{Examples of Quantized VCG Performance}
In this section we examine the performance of the quantized VCG mechanism for the two examples introduced in the previous section. In the following section, we turn to analyzing the performance in more general settings. 

\textbf{Example 1 }(continued): Recall, for the original polymatroid $P_f$, the optimal allocation is ${\bf x}^* = (0.6,0.4)$. For $P_f$, we can find the minimum distance is $\bigtriangleup f = 0.2$. Hence, following Algorithm \ref{greedy}, we have $M \geq \frac{2}{0.2}=10$. Fig.~\ref{ex:1} shows the allocation $(x_1,x_2)$ given by the quantized VCG mechanism as $M$ varies.  Also shown are the optimal allocations in $P_f$. It can be seen that with 10 partitions, the quantized VCG allocation is close to the optimal and approaches it further as $M$ increases.
\begin{figure}[h]
	\centering
	\includegraphics[width=0.75\linewidth]{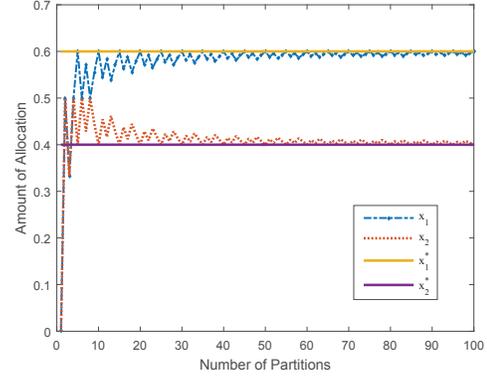}
	\caption{Performance of the quantized VCG mechanism for Example 1.}\label{ex:1}
\end{figure}%

\textbf{Example 2 }(continued): For this example, the optimal allocation is ${\bf x}^* = (0.7,0.2,0.1)$ and the minimum distance is $\bigtriangleup f = 0.5$. Hence, from Algorithm \ref{greedy} we have $M \geq \frac{2}{0.5}=4$. Fig. \ref{ex:1} shows the allocation $(x_1,x_2,x_3)$ as a function of the number of divisions. Again, the allocation given by the quantized VCG mechanism is close to the optimal allocation for $M = 4$ and becomes closer as $M$ increases.
\begin{figure}[h]
	\centering
	\includegraphics[width=0.75\linewidth]{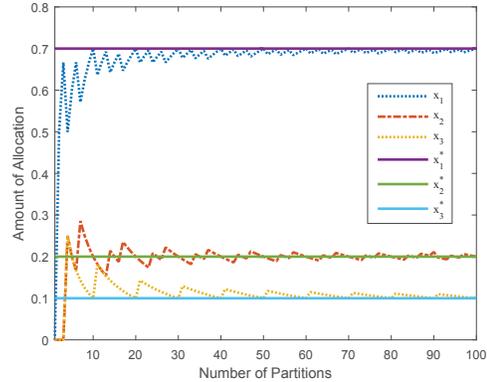}
	\caption{Performance of the quantized VCG mechanism for Example 2.}\label{ex:2}
\end{figure}%

Also these figures show the trade-off between the efficiency of the quantized VCG mechanism and the communication cost. As we increase the number of partitions, the difference between the allocation given by the quantized VCG mechanism and the optimal solution is smaller, giving better efficiency. On the other hand, a larger number of partitions means a larger communication cost (and larger computational costs).  

\subsection{Efficiency Analysis}
As in much of the literature, we evaluate the performance of the quantized VCG mechanism via the worst-case efficiency. This is defined as the ratio between the social welfare given by the quantized VCG mechanism and the optimal social welfare without quantization in the worst case (over the class of admissible utilities). Specifically, the worst-case efficiency for our mechanism is:
\begin{equation}
	\inf_{u_1,\cdots,u_N} \frac{\max_{{\bf y} \in P_{\tilde{f}}} \sum _{n=1}^N{u}_n(\frac{y_n}{M})}{\max_{{\bf x} \in P_f} \sum _{n=1}^N{u}_n(x_n)},
\end{equation}
where $u_i$ is any utility function satisfying Assumptions 2 and 3.  
\begin{theorem}\label{th:6}
	For a system with $N$ agents, suppose we partition the resource into $M$ divisions equally and run the quantized VCG mechanism as proposed, where $M$ satisfies the condition in Algorithm 2. In this case, the efficiency is at least $$\frac{ M\beta + 2(\alpha-\beta)}{M\alpha +2(\alpha - \beta) -[M-N+1]^+(\alpha-\beta)}.$$
\end{theorem}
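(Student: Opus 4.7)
The plan is to compare $W_q := \max_{\mathbf{y}\in P_{\tilde f}}\sum_n u_n(y_n/M)$ against $W^* := \max_{\mathbf{x}\in P_f}\sum_n u_n(x_n)$ by constructing an explicit feasible quantized allocation, bounding the welfare loss via concavity, and closing the argument with an elementary algebraic identity.

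First, I would let $\mathbf{x}^*$ attain $W^*$; by Proposition~\ref{prop:1}, $\sum_n x_n^*=1$. Set $y_n^0:=\lfloor Mx_n^*\rfloor$. For any $S\subseteq\mathcal{N}$, $\sum_{n\in S}y_n^0\le M\sum_{n\in S}x_n^*\le Mf(S)$, and since the left side is an integer, $\sum_{n\in S}y_n^0\le \tilde f(S)$; thus $\mathbf{y}^0\in P_{\tilde f}$. Define $k:=M-\sum_n y_n^0=\sum_n\{Mx_n^*\}$; each fractional part lies in $[0,1)$ and $\sum_n y_n^0\ge0$, giving $0\le k\le\min(M,N-1)$. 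By Lemma~\ref{lemma:1}, $P_{\tilde f}-\mathbf{y}^0$ is a polymatroid of rank $k$, so Lemma~\ref{lemma:2.1} supplies a nonnegative integer $\boldsymbol{\delta}$ with $\sum_n\delta_n=k$ and $\mathbf{y}:=\mathbf{y}^0+\boldsymbol{\delta}\in\mathcal{D}(P_{\tilde f})$.

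Next, the tangent-line inequality for each concave $u_n$ with $\beta<u_n'<\alpha$ yields $u_n(x_n^*)-u_n(y_n/M)\le\alpha(x_n^*-y_n/M)^+-\beta(y_n/M-x_n^*)^+$. Summing and using $\sum_n x_n^*=\sum_n y_n/M=1$, which forces $\sum_n(x_n^*-y_n/M)^+=\sum_n(y_n/M-x_n^*)^+$, collapses the bound to $(\alpha-\beta)\sum_n(x_n^*-y_n/M)^+$. Since $M(x_n^*-y_n/M)=\{Mx_n^*\}-\delta_n$, this aggregate is at most $k/M\le (M-[M-N+1]^+)/M$, so
$$W^*-W_q\le \frac{(\alpha-\beta)(M-[M-N+1]^+)}{M}.$$

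The step I expect to be the main obstacle is establishing the matching lower bound $W_q\ge \beta+2(\alpha-\beta)/M$, which is exactly the numerator of the claim divided by $M$. My plan here is to exploit Proposition~\ref{prop:2}: since $\tilde f(\{n\})\ge 2$ for every $n$, each agent can be allocated at least two quantized units, so the greedy allocation (Algorithm~\ref{greedy}) must include at least two slots drawn from the highest surrogate marginal-utility bids, contributing a bonus of $2(\alpha-\beta)/M$ over the trivial dominant-face baseline $W_q\ge\beta\sum_n y_n/M=\beta$. Making this bonus rigorous is the delicate part: it requires a careful case analysis that combines the greedy rule with the submodularity and monotonicity of $\tilde f$ to rule out adversarial utilities that could suppress those top two bids.

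Finally, I would combine the two bounds via the elementary fact that $W_q/W^*\ge L/(L+D)$ whenever $W_q\ge L$ and $W^*-W_q\le D$. Setting $L=\beta+2(\alpha-\beta)/M$ and $D=(\alpha-\beta)(M-[M-N+1]^+)/M$ and simplifying yields
$$\frac{W_q}{W^*}\ge \frac{M\beta+2(\alpha-\beta)}{M\alpha+2(\alpha-\beta)-[M-N+1]^+(\alpha-\beta)},$$
which is the claimed efficiency bound.
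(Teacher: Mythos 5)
Your construction of the feasible quantized point ($\mathbf{y}^0=\lfloor M\mathbf{x}^*\rfloor$ plus a dominant-face completion via Lemmas~\ref{lemma:1} and~\ref{lemma:2.1}) and your additive loss bound $W^*-W_q\le(\alpha-\beta)\bigl(M-[M-N+1]^+\bigr)/M$ are both correct and parallel the first half of the paper's argument. The gap sits exactly where you flagged it, and it cannot be closed in the form you propose: the absolute lower bound $W_q\ge\beta+2(\alpha-\beta)/M$ is false. Take every agent's utility to be $u_n(x)=(\beta+\varepsilon)x$ for small $\varepsilon>0$; then every dominant-face allocation gives $W_q=\beta+\varepsilon$, which falls below $\beta+2(\alpha-\beta)/M$ whenever $\varepsilon<2(\alpha-\beta)/M$. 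The reason the ``two top bids'' bonus cannot be made rigorous at level $\alpha$ is that $\alpha$ is only an a priori cap on marginal utilities (Assumption 3); no agent need actually value the resource anywhere near $\alpha$, so Proposition~\ref{prop:2} only guarantees two units at the largest \emph{realized} marginal value, not at $\alpha$. Since your combination step $W_q/W^*\ge L/(L+D)$ requires $W_q\ge L$ to hold for the instance at hand, the argument as structured does not go through.

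The paper avoids this by never decoupling the two estimates. It works with the ratio directly: the tangent inequality at $y_n/M$ reduces the worst case to utilities that are linear up to $y_n/M$, and the instance-dependent quantity $\alpha^*=\max_n u_n'(y_n/M)$ is introduced. Proposition~\ref{prop:2} then contributes two units valued at $\alpha^*$ to the numerator while the \emph{same} $\alpha^*$ caps the fractional mass term in the denominator; the resulting expression is shown to be decreasing in $\alpha^*$, and only at the final step is $\alpha^*$ replaced by $\alpha$. To salvage your decomposition you would have to carry such an instance-dependent $\alpha^*$ through both $L$ and $D$ simultaneously and then optimize over $\alpha^*\in(\beta,\alpha]$ --- which is essentially the paper's multiplicative argument in disguise. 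Your closing algebra (that $L/(L+D)$ reproduces the stated bound) and the loss bound $D$ are fine; the sole defect is the claim $W_q\ge\beta+2(\alpha-\beta)/M$.
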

\begin{proof}
	If $\bf x ^*$ is the optimal solution without quantization, then ${\bf y} = \lfloor {\bf x}^* M\rfloor$ must lie in $P_{\tilde{f}}$. In addition, we can always find $\bf \mathring y$ on the dominant face of $P_{\tilde{f}} - {\bf y}$, which is also a polymatroid by Lemma \ref{lemma:1} (and is also integral since {$\mathbf y$} is integer-valued). 
	Hence, 
	$\sum_{n=1}^N \mathring{y}_n = M -\sum_{n=1}^N \lfloor x^*_n M\rfloor=  \sum_{n=1}^N \{ x^*_n M\}$
	and so $\bf y + \mathring y$ is a feasible allocation of divisions on the dominant face of $P_{\tilde{f}}$.
	
	Thus, the efficiency is lower bounded by 
	\begin{equation}\label{eq:26}
		\frac{ \sum _{n=1}^N{u}_n(\frac{y_n+\mathring{y}_n}{M})}{ \sum _{n=1}^N{u}_n(x_n^*)}\geq  \frac{ \sum _{n=1}^N{u}_n(\frac{y_n}{M})+ \frac{\beta}{M}\sum_{n=1}^N \mathring{y}_n}{ \sum _{n=1}^N{u}_n(x_n^*)} .
	\end{equation}
	
	Recall that the utility functions are concave and for agent $n$ we have:
	\begin{align}
		\nonumber u_n(x_n^*) \leq& u_n'(\frac{y_n}{M})(x_n^*-\frac{y_n}{M})+ u_n(\frac{y_n}{M}) \\
		=&(u_n(\frac{y_n}{M}) - \frac{y_n}{M} u_n'(\frac{y_n}{M})) + x_n^* u_n'(\frac{y_n}{M}).
	\end{align}
	Therefore, 
	\begin{align}
		\nonumber &\frac{ \sum _{n=1}^N{u}_n(\frac{y_n}{M})+ \frac{\beta}{M}\sum_{n=1}^N \mathring{y}_n}{ \sum _{n=1}^N{u}_n(x_n^*)} . \\
		\nonumber \label{eq:1818}\geq &\frac{\sum_{n=1}^N \frac{y_n}{M} u_n'(\frac{y_n}{M})+\frac{\beta}{M} \sum_{n=1}^N \mathring{y}_i }{\sum_{n=1}^N x_n^* u_n'(\frac{y_n}{M})}\\
		= &\frac{\sum_{n=1}^N \lfloor x^*_n M\rfloor u_n'(\frac{y_n}{M}) +\beta \sum_{n=1}^N \{ x^*_n M\}}{\sum_{n=1}^N \lfloor x^*_n M\rfloor u_n'(\frac{y_n}{M})+\sum_{n=1}^N \{ x^*_n M\} u_n'(\frac{y_n}{M})}.
	\end{align}
	This indicates to find the lower bound for the efficiency, we should only focus on utility functions $u_n$ that are linear for $x \leq \frac{y_n}{M}$.
	
	Proposition~\ref{prop:2} shows that $\tilde{f}(S) \geq 2$ for any subset $S$. Hence, the greedy algorithm must pick the 2 greatest bids. Let $\alpha^* = \max_n  u_n'(\frac{y_n}{M})$ and so we have:
	\begin{equation}
		\sum_{n=1}^N \lfloor x^*_n M\rfloor u_n'(\frac{y_n}{M}) \geq 2\alpha^* + \beta (\sum_{n=1}^N \lfloor x^*_n M\rfloor-2),
	\end{equation}
	\begin{equation}
		\sum_{n=1}^N \{ x^*_n M\} u_n'(\frac{y_n}{M}) \leq \alpha^* \sum_{n=1}^N \{ x^*_n M\}.
	\end{equation}
	Therefore,
	\begin{equation}\label{eq:28}
		(\ref{eq:1818})\geq \frac{(M-2)\beta + 2\alpha^*}{(M+2)\alpha^* -2\beta -\sum\limits_{n=1}^N \lfloor x^*_n M\rfloor(\alpha^*-\beta)},
	\end{equation}
	which is increasing in $\sum_{n=1}^N \lfloor x^*_n M\rfloor$ because $\alpha^* -\beta $ is always nonnegative. Since $\sum_{n =1}^N x_n^*M = M $, we have
	$
	\sum\limits_{n=1}^N \{ x^*_n M\} \leq N-1,
	$
	and hence
	\begin{equation}\label{eq:29}
		\sum_{n=1}^N \lfloor x^*_n M\rfloor \geq  \max\{ 0,M - N + 1\} .
	\end{equation}
	Substituting (\ref{eq:29}) into (\ref{eq:28}) yields:
	$$  (\ref{eq:1818}) \geq\left\{
	\begin{aligned}
	&\frac{2\alpha^* +(M-2)\beta}{(M+2)\alpha^* -2\beta},  \quad \quad  \quad \quad \quad \quad M < N-1, \\ 
	&\frac{2\alpha^* + (M-2)\beta}{(N+1)\alpha^* + (M-N-1)\beta},   \ \ \  M \geq N-1.
	\end{aligned}
	\right.$$
	In either case, we can check that the lower bound is a decreasing function over $\alpha^*$ and we know $\alpha^* \leq \alpha$, therefore, we finish the proof by letting $\alpha^* = \alpha$.
\end{proof}

This lower bound is tight. Consider the example with two agents: $P_f = \{{\bf x}| x_1\leq 1-\eta, x_2 \leq \frac{2}{3} +\eta, x_1+x_2 \leq 1\}$ and $u_1(x) = \alpha x$, $u_2(x) =\beta x$. We can see in this case $\bigtriangleup f = \frac{2}{3}$ and $M$ should be greater than 3. If we pick $M =3 $, then the resulting allocation is $(\frac{2}{3}, \frac{1}{3})$, while the optimal allocation is $(1-\eta,\eta)$.  Hence, the efficiency is $\frac{2\alpha +\beta}{3(1-\eta)\alpha + 3\eta\beta}$. As $\eta$ goes to 0, we achieve the lower bound, which is $\frac{2\alpha+\beta}{3\alpha}$.

\textit{Remark 2}: When $\beta \rightarrow \alpha $, the lower bound approaches 1 and is tight, which makes sense because in this case, all agents have the same valuations of the resource and so the allocation is always efficient.

The lower bound is an increasing function of number of partitions, which indicates the trade-off between the efficiency and the communication cost as well. As $M$ goes to infinity, the lower bound goes to 1 since the mechanism approaches the VCG mechanism. Moreover, for a fixed number of partitions, the lower bound decreases as the number of agents grows, and when $N \geq M + 1$, the lower bound becomes independent of the number of agents and is a constant given $M$.

\section{Rounded Quantized VCG Mechanism}
In the last section, we specified the quantized VCG mechanism for allocating a divisible resource in a polymatroid environment and provided a lower bound for the worst-case efficiency. In the quantized VCG mechanism, agent $n$ only needs to send $\tilde{f}(\{n\})$ bids in total to indicate her utility for the possible allocation. This is a reduction of the dimensionality of the needed communication - a common measure used in both the engineering literature (e.g. \cite{yang2007vcg,haoge2018}) and in economics (e.g., this is the notion of information efficiency used by Hurowitz and Reiter \cite{hurwicz2006designing}). However, from an information theoretic point of view, conveying a real number still requires an infinite number of bits. In a small networked system, agents can send a very long bids to approximate the real value, but when the number of agents grows in the network, the total amount of communication may be unacceptable. Hence, in this section we further consider quantizing the bids sent by each agent in addition to the quantization  of the resource in a large networked system (e.g., $N \geq 10$).\footnote{Again this builds on work in \cite{haoge2018}, which considered a similar approach for a single divisible resource and only one of the bidding strategies we discuss below.} 

Concretely, we determine a monetary unit $\delta$. Each agent is restricted to give valuations that are integer multiple of $\delta$. Thereby, to indicate the surrogate marginal utility, agent $n$ needs to send an integer vector $\hat{W}_n = \{\hat{w}_{n1},\cdots,\hat{w}_{n\tilde{f}(\{n\})} \}$ to approximate her true utility. We call the resulting mechanism the \textit{rounded quantized VCG mechanism. }

More precisely this mechanism is defined as follows:
\begin{itemize}
	\item{Determine the number of partitions and partition the resource into $M$ divisions equally using Algorithm \ref{integer-valued}. Again, denote the feasible region for number of units by $P_{\tilde{f}}$}.
	\item{Determine and broadcast the monetary unit $\delta$ }.
	\item{Solicit and accept sealed value vectors $\hat{W}_n $}.
	\item{Run the quantized VCG mechanism with marginal utility vectors $\hat{V}_n= \hat{W}_n \delta$. Break ties randomly and determine the allocation and payment.}
\end{itemize}

In other words, following this mechanism, agent $n$ equivalently reports another function $\tilde{u}_n(x)$ to approximate $u_n$, where 
\begin{equation}
	\tilde{u}_n(\frac{y_n}{M}) = \sum _{m=1} ^{y_n} \hat{w}_{nm}\delta + \beta \frac{y_n}{M}.
\end{equation}
As shown in Theorem \ref{thm:4}, the resource allocator aims to find allocation $\bf \tilde{y}^*$ to maximize the sum of $\tilde{u}_n$, e.g.,
\begin{equation}\label{eq:33}
	{\bf \tilde{y}}^* = \arg\max_{{\bf y } \in P_{\tilde{f}}} \sum_{n=1}^N \tilde{u}_n(\frac{y_n}{M}).
\end{equation}

Obviously, the social welfare given by the rounded quantized VCG mechanism is less than that given by the quantized VCG since each agent cannot report her utility accurately. We will analyze the loss due to this restriction later in this section. Also,  here we require that agents submit a non-increasing sequence of bids (as we will see in the following, due to bid quantization agents may not have an incentive to do this without the restriction). The main reason for this restriction is that it is needed for us to use the greedy algorithm to correctly determine the outcome and payments. 
\subsection{Equilibrium Analysis}
It is shown in \cite{haoge2018} that there is no dominant strategy when using such a mechanism over a single-link network. Hence, this is also true for our more general polymatroid setting. \footnote{If we set all constraints to the same value, e.g., $f(S) = 1$, for all $S$, then our problem is reduced to a single-link problem.} We instead adopt the following relaxed solution concept which allows for agents to tolerate some loss:

\begin{definition}
	Given any $\epsilon \geq 0$, a strategy $\sigma_n^*$ for agent $n$ is called \textit{$\epsilon$-dominant} if for all $\sigma_n$ and $\sigma_{-n}$, $u_i(\sigma_n,\sigma_{-n})  -u_n(\sigma_n^*,\sigma_{-n}) \leq\epsilon.$
\end{definition}

In other words, for agent $n$, any unilateral deviation from strategy $\sigma_n^*$ leads to at most an $\epsilon$ gain. A dominant strategy is a special case of an $\epsilon$-dominant strategy with $\epsilon = 0$.

\begin{definition}
	A strategy profile $\sigma^*= (\sigma_1^*,\cdots,\sigma_N^*)$ forms an $\epsilon$-equilibrium if for every agent $n$, $\sigma_n^*$ is $\epsilon$-dominant. 
\end{definition}

\begin{figure*}[t]
	\centering
	\begin{subfigure}[b]{0.3\textwidth}
		\centering
		\includegraphics[width=0.9\linewidth]{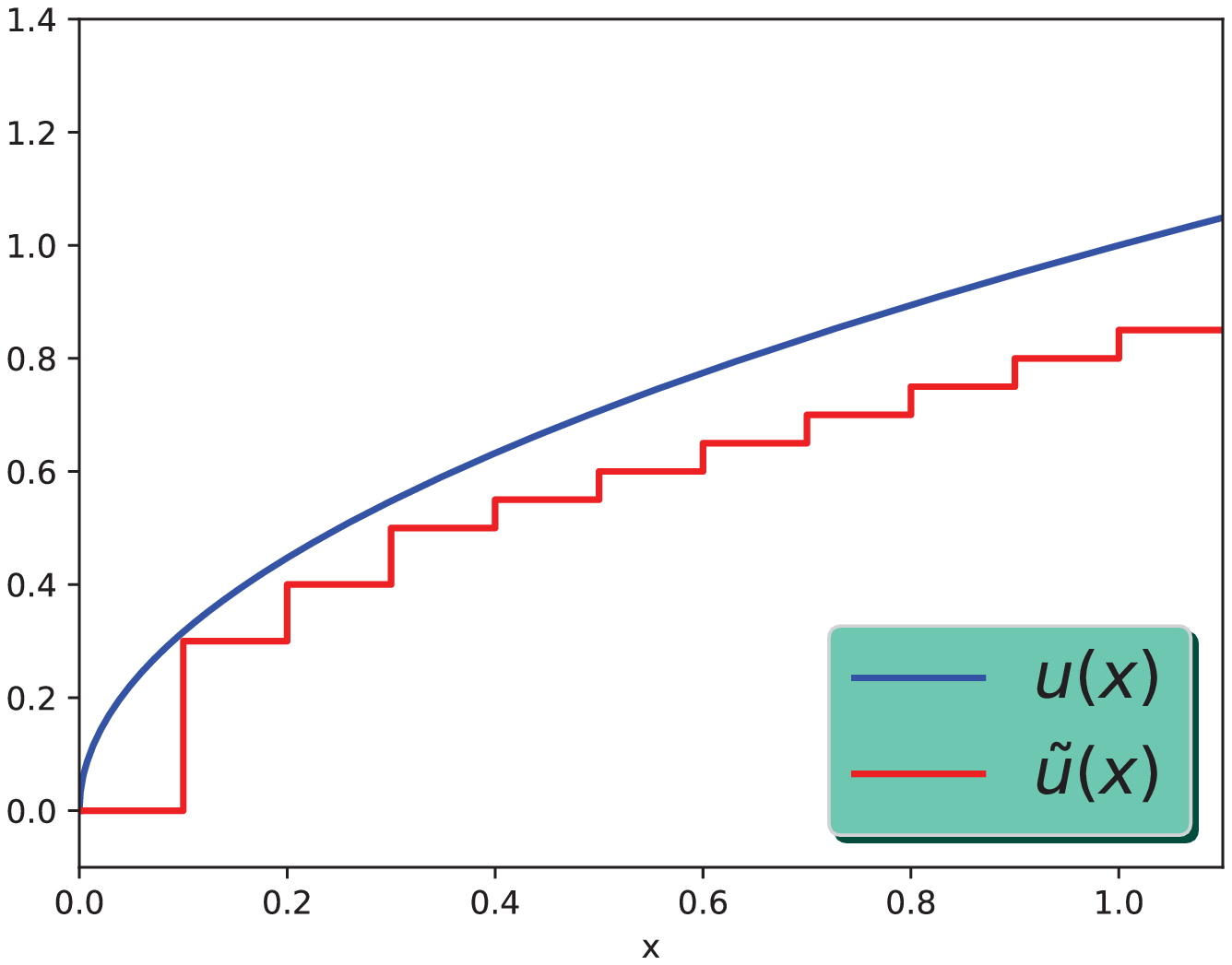}\label{fig:floor}
		\caption{The FLOOR strategy.}
	\end{subfigure}%
	\begin{subfigure}[b]{0.3\textwidth}
		\centering
		\includegraphics[width=0.9\linewidth]{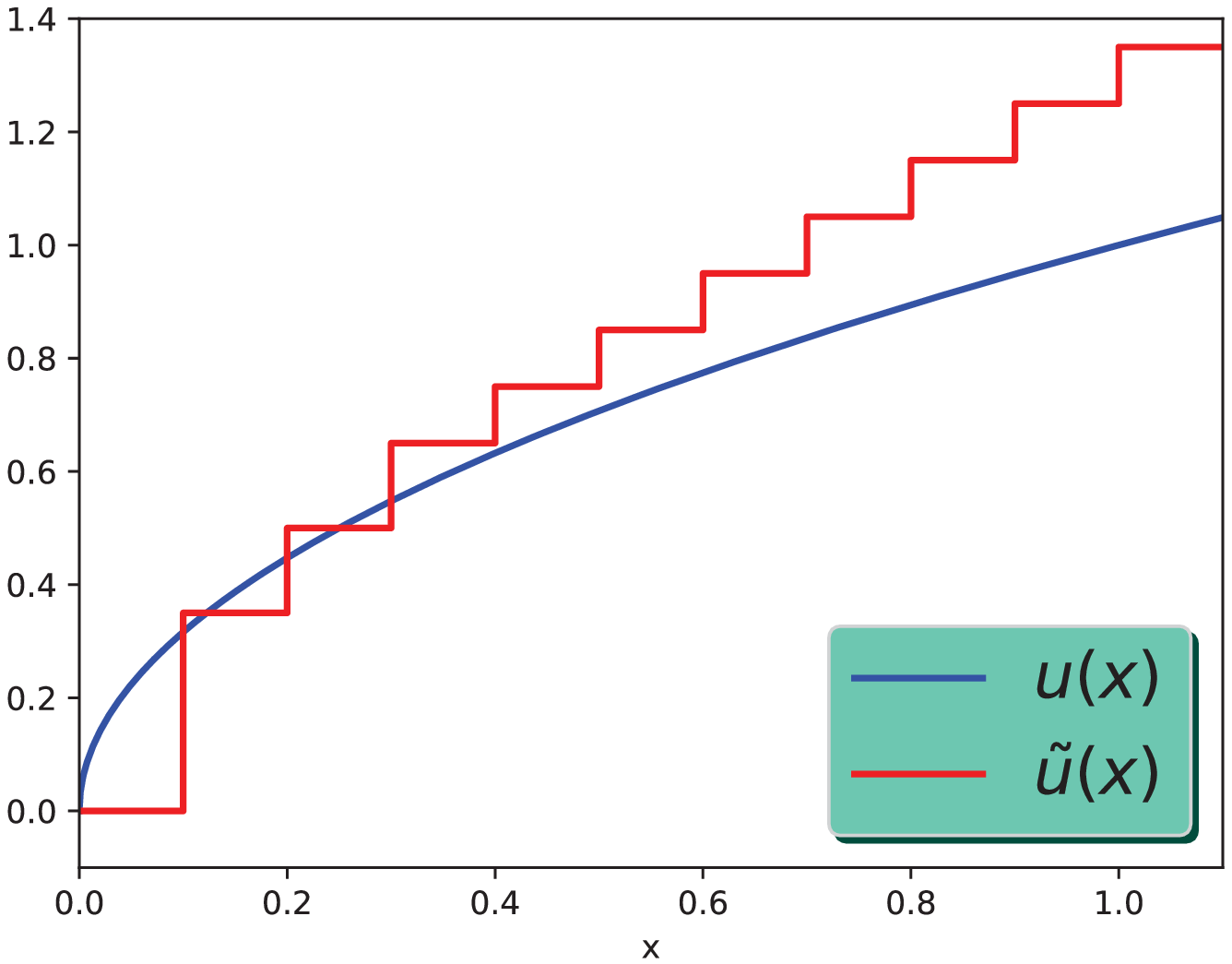}
		\caption{The CEILING strategy.}
	\end{subfigure}%
	\begin{subfigure}[b]{0.3\textwidth}
		\centering
		\includegraphics[width=0.9\linewidth]{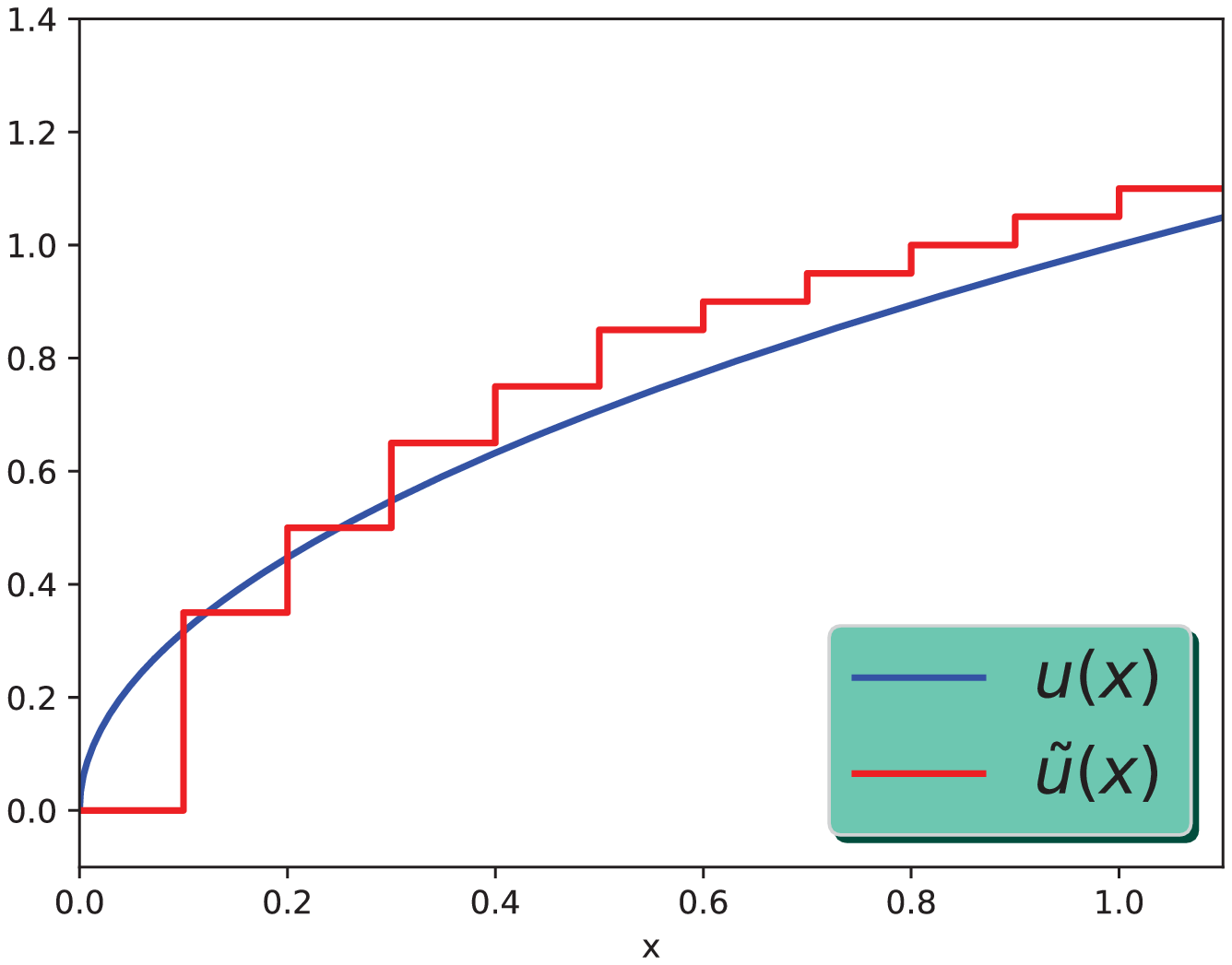}
		\caption{The CEILOOR strategy.}
	\end{subfigure}
	\caption{Equivalent utility function $\tilde{u}(x)$ under different strategies, the three example equilibrium strategies, where the true utility function $\hat{u}(x) = \sqrt{x}$ and the monetary unit $\delta = 0.05$.}\label{fig:animals} 
\end{figure*}

For this section, we assume agents can tolerate a loss $\epsilon$ and adopt the above solution concept. Note that this solution concept shares the key properties of dominant strategies. For example, once again, agents do not need any knowledge of the actions or rationality of other players to determine their action. 

One result of this relaxation in the solution concept is that agents no longer have a unique strategy. We illustrate this in the next three theorems, which each show a different equilibrium profile (see also Figure \ref{fig:animals}).
\begin{theorem}\label{th:666}
	Under the rounded quantized VCG mechanism with given $\epsilon$, if set $\delta = \frac{\epsilon}{M}$, then for any agent $n$, it is an $\epsilon$-dominant strategy to report $\hat{w}_{nm}=\lfloor \frac{\hat{v}_{nm}}{\delta}\rfloor$ for the $m^{th}$ partition. Moreover, if all agents play this $\epsilon$-dominant strategy, the maximum difference between social welfare given by the quantized VCG and the rounded quantized VCG is $\eta = \epsilon=M\delta $.
\end{theorem}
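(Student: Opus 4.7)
The plan is to reduce both claims to a single elementary estimate: the FLOOR truncation satisfies $0 \le \hat v_{nm} - \hat w_{nm}^F \delta < \delta$ per bid, so any sum of the form $\sum_{m \le y_n}(\hat v_{nm} - \hat w_{nm}^F \delta)$ is at most $y_n \delta$. Since by Corollary \ref{lemma:2} every optimal allocation lies on the dominant face where $\sum_n y_n = M$, the aggregate truncation error is at most $M\delta = \epsilon$.

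For the $\epsilon$-dominance claim, I will first rewrite agent $n$'s quasilinear payoff $\pi_n = u_n(y_n^*/M) - p_n$ by substituting the quantized VCG payment and using $u_n(y/M) = \sum_{m \le y}\hat v_{nm} + \beta y/M$ to cancel the $\beta y_n^*/M$ correction. This collapses $\pi_n$ to $B(y^*) - \mathrm{OPT}(0, \hat W_{-n}\delta)$, where
\begin{equation*}
B(y) := \sum_{m \le y_n}\hat v_{nm} + \sum_{k \ne n}\sum_{m \le y_k}\hat w_{km}\delta
\end{equation*}
is a ``true-value'' objective whose only coupling to agent $n$'s report is through the induced allocation $y$, and the subtracted term is a constant from agent $n$'s point of view. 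The mechanism itself, however, maximizes the ``reported'' objective $A(y;\hat W_n) := \sum_{m \le y_n}\hat w_{nm}\delta + \sum_{k \ne n}\sum_{m \le y_k}\hat w_{km}\delta$. Under FLOOR, the per-bid bound gives $0 \le B(y) - A(y;\hat W_n^F) \le M\delta$ uniformly over the dominant face, so for any alternative report $\hat W_n^{\mathrm{alt}}$ with induced allocation $y^{\mathrm{alt}}$ versus FLOOR's allocation $y^F$,
\begin{equation*}
B(y^F) \ge A(y^F;\hat W_n^F) \ge A(y^{\mathrm{alt}};\hat W_n^F) \ge B(y^{\mathrm{alt}}) - M\delta,
\end{equation*}
where the middle inequality uses optimality of $y^F$ for $A(\cdot;\hat W_n^F)$. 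This rearranges to $\pi_n(\hat W_n^{\mathrm{alt}}, \hat W_{-n}) - \pi_n(\hat W_n^F, \hat W_{-n}) \le \epsilon$, establishing $\epsilon$-dominance. I will also note in passing that FLOOR produces a valid non-increasing reported bid sequence since $\lfloor \cdot \rfloor$ is monotone.

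For the welfare gap, let $y^R$ be the allocation under the rounded mechanism with everyone playing FLOOR and $y^Q$ the quantized-VCG allocation. Since $\tilde u_k(x) \le u_k(x) \le \tilde u_k(x) + y_k\delta$ at $x = y_k/M$, summing on the dominant face yields $\sum_k \tilde u_k(y_k/M) \ge \sum_k u_k(y_k/M) - M\delta$, and the standard sandwich
\begin{equation*}
\sum_k u_k(y_k^R/M) \ge \sum_k \tilde u_k(y_k^R/M) \ge \sum_k \tilde u_k(y_k^Q/M) \ge \sum_k u_k(y_k^Q/M) - M\delta
\end{equation*}
(middle inequality from $y^R$ maximizing $\sum_k \tilde u_k$) gives $SW^Q - SW^R \le M\delta = \epsilon$.

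The main obstacle is careful bookkeeping rather than any deep step: threading the surrogate-marginal conversion with its $\beta y/M$ offset through the VCG payment so that $\mathrm{OPT}(0,\hat W_{-n}\delta)$ drops out as a constant in agent $n$'s decision problem, and verifying that both the bound $B - A \le M\delta$ and the welfare gap are controlled by $\sum_n y_n = M$ on the dominant face. Once the payoff has been rewritten in this isolated form, the remainder is an optimization over an allocation simplex driven by a single per-bid $\delta$-truncation estimate.
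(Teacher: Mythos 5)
Your proposal is correct and follows essentially the same route as the paper's proof: the per-bid floor bound $0 \le \hat v_{nm} - \hat w_{nm}\delta < \delta$ aggregated to $M\delta$ over the dominant face, the rewriting of the VCG payoff so that $\mathrm{OPT}(0,\hat W_{-n}\delta)$ is a constant and the agent's objective becomes the true-value social objective, and the same three-term sandwich for both the $\epsilon$-dominance and the welfare-gap claims (the paper compares against the unconstrained optimum $R$ rather than an arbitrary alternative report, but this is the same argument). No gaps.
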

\begin{proof}
	For agent $n$, as shown in the last section, reporting the true surrogate marginal utility $\hat{V}_n$ (equivalently reporting ${u}_n$) is optimal without the quantized bid constraint and the corresponding maximum payoff is: 
	\begin{align}
		\nonumber&\max\limits_{{\bf y} \in P_{\tilde{f}}} u_n(\frac{y_n}{M}) +\sum_{j \neq n}\tilde{u}_j(\frac{y_j}{M}) -\text{OPT}(0,\hat{W}_{-n}\delta)-\beta \\
		\label{17}=& R -\text{OPT}(0,\hat{W}_{-n}\delta)-\beta.
	\end{align}
	Here, $\hat{W}_{-n}$ indicates the other agents' bids, which correspond to $\tilde{u}_{-n}$, and 
	$
	R =\max\limits_{{\bf y} \in P_{\tilde{f}}} u_n(\frac{y_n}{M}) +\sum_{j \neq n}\tilde{u}_j(\frac{y_j}{M}).
	$
	Suppose agent $n$ reports $W_n = \lfloor \frac{V_n}{\delta} \rfloor$ and the corresponding utility function is $\tilde{u}_n$. Given the final allocation $\bf \tilde{y}^*$, then agent $n$'s payoff is:
	\begin{align}
		\nonumber u_n(\frac{\tilde{y}_n^*}{M}) - p_n =& u_n(\frac{\tilde{y}_n^*}{M}) +\sum_{j \neq n}\tilde{u}_j(\frac{\tilde{y}_j^*}{M}) -\text{OPT}(0,\tilde{W}_{-n}\delta)-\beta\\
		\label{16}= & L - \text{OPT}(0,\tilde{W}_{-n}\delta) -\beta,
	\end{align}
	where 
	$
	\label{eq:L}L= \max \limits_{{\bf y} \in P_{\tilde{f}}} \sum\limits_{n=1}^N \tilde{u}_{n}(\frac{y_n}{M})+u_n(\frac{\tilde{y}_n^*}{M}) -\tilde{u}_n(\frac{\tilde{y}_n^*}{M}).
	$
	
	To show the strategy is $\epsilon-$dominant, it is sufficient to show $L \geq R -\epsilon$, which implies that agent $n$'s loss under this strategy is less than $\epsilon$ compared with the maximum payoff she can get.
	
	As shown in Fig. \ref{fig:animals}(a), by reporting the floor value, for $y_n =0,1\cdots, M$,
	\begin{equation}\label{eq:36}
		\tilde{u}_n(\frac{y_n}{M}) = \sum _{m=1} ^{y_n} \hat{w}_{nm}\delta +\beta \frac{y_n}{M}= \sum _{m=1} ^{y_n} \lfloor \frac{\hat{v}_{nm}}{\delta}\rfloor\delta +\beta \frac{y_n}{M}.
	\end{equation}
	Hence, we have:
	\begin{equation}\label{eq:37}
		u_n(\frac{y_n}{M}) -y_n\delta \leq \tilde{u}_n(\frac{y_n}{M})   \leq u_n(\frac{y_n}{M})
	\end{equation} 
	and so 
	\begin{align}
		\nonumber L \geq &\max \limits_{{\bf y} \in P_{\tilde{f}}} \sum_{n=1}^N \tilde{u}_{n}(\frac{y_n}{M})\\
		\nonumber\geq & \max\limits_{{\bf y} \in P_{\tilde{f}}}u_n(\frac{y_n}{M}) +\sum_{j \neq n}\tilde{u}_j(\frac{y_j}{M}) - y_n\delta\\
		\label{20}\geq & R- M\delta.
	\end{align}
	Therefore, it is an $\epsilon$-dominant strategy to report the floor value.
	
	Next, we show that if all agents follow this strategy, the difference between the social welfare given by the quantized VCG and the rounded quantized VCG mechanism is upper bounded by $M\delta $. In other words, the loss in social welfare due to the quantized bid constraint is no greater than $M\delta$. According to the definition, the difference is:
	\begin{align*}
		\sum_{n=1}^N u_{n}(\frac{y_n^*}{M}) - \sum_{n=1}^N u_{n}(\frac{\tilde{y}^*_n}{M}) \leq & \sum_{n=1}^N \tilde{u}_{n}(\frac{y_n^*}{M})+ y_n^*\delta- \sum_{n=1}^N u_{n}(\frac{\tilde{y}^*_n}{M})\\
		\leq & \sum_{n=1}^N \tilde{u}_{n}(\frac{\tilde{y}^*_n}{M})- \sum_{n=1}^N u_{n}(\frac{\tilde{y}^*_n}{M}) + M\delta \\
		\leq & M\delta .
	\end{align*}
	The inequality (\ref{eq:37}) is used in the last two steps. 
\end{proof}

For simplicity, we call this the FLOOR strategy. Similarly, we next define a CEILING strategy and again bound for loss in social welfare.
\begin{theorem}
	Under the rounded quantized VCG mechanism with given $\epsilon$, if set $\delta = \frac{\epsilon}{M}$, then for any agent $n$, it is an $\epsilon$-dominant strategy to report $\hat{w}_{nm}=\lceil \frac{\hat{v}_{nm}}{\delta}\rceil$ for the $m^{th}$ partition. If all agents play this $\epsilon$-dominant strategy, the maximum difference between the social welfare given by the quantized VCG and the rounded quantized VCG is $\eta =\epsilon= M\delta $.
\end{theorem}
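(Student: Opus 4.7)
The plan is to mirror the structure of the FLOOR proof, substituting ceilings for floors and reversing the direction of the approximation inequality throughout. The essential machinery — the $R - \text{OPT}(0,\hat{W}_{-n}\delta)-\beta$ upper bound on the best possible payoff, and the expression $L - \text{OPT}(0,\tilde{W}_{-n}\delta)-\beta$ for the payoff under strategy $\sigma^*_n$ — carries over verbatim, so the proof reduces to re-deriving the sandwich inequality on $\tilde{u}_n$ and re-running the two algebraic chains.

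First, I would establish the ceiling analogue of (\ref{eq:37}). If agent $n$ reports $\hat{w}_{nm}=\lceil \hat{v}_{nm}/\delta\rceil$, then
\begin{equation*}
\tilde{u}_n\!\left(\tfrac{y_n}{M}\right)=\sum_{m=1}^{y_n}\Bigl\lceil\tfrac{\hat{v}_{nm}}{\delta}\Bigr\rceil\delta+\tfrac{\beta y_n}{M},
\end{equation*}
and since $\hat v_{nm}\le \lceil \hat v_{nm}/\delta\rceil\delta\le \hat v_{nm}+\delta$, we get
\begin{equation*}
u_n\!\left(\tfrac{y_n}{M}\right)\le \tilde{u}_n\!\left(\tfrac{y_n}{M}\right)\le u_n\!\left(\tfrac{y_n}{M}\right)+y_n\delta.
\end{equation*}
Note the bounds are flipped relative to the FLOOR case; monotonicity of $\tilde u_n$ in the bid index is preserved because ceilings preserve the non-increasing order up to ties, which are broken by the submission constraint that bids be non-increasing.

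Next, for the $\epsilon$-dominance claim, I would show $L\ge R-M\delta$ using this sandwich. Starting from the definition $L=\max_{y\in P_{\tilde f}}\sum_n\tilde u_n(y_n/M)+u_n(\tilde y_n^*/M)-\tilde u_n(\tilde y_n^*/M)$, applying the upper bound $\tilde u_n(\tilde y_n^*/M)\le u_n(\tilde y_n^*/M)+\tilde y_n^*\delta\le u_n(\tilde y_n^*/M)+M\delta$ gives
\begin{equation*}
L\ge \max_{y\in P_{\tilde f}}\sum_n\tilde u_n\!\left(\tfrac{y_n}{M}\right)-M\delta.
\end{equation*}
Then, using the lower bound $\tilde u_n(y_n/M)\ge u_n(y_n/M)$ on the agent-$n$ summand inside the max yields
\begin{equation*}
L\ge \max_{y\in P_{\tilde f}}\Bigl[u_n\!\left(\tfrac{y_n}{M}\right)+\sum_{j\ne n}\tilde u_j\!\left(\tfrac{y_j}{M}\right)\Bigr]-M\delta=R-M\delta=R-\epsilon,
\end{equation*}
so the loss compared with the best attainable payoff $R-\text{OPT}(0,\hat W_{-n}\delta)-\beta$ is at most $\epsilon$.

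Finally, for the social-welfare bound, I would chain the sandwich around the optimizer swap. With $y^*$ optimizing $\sum_n u_n(y_n/M)$ over $P_{\tilde f}$ and $\tilde y^*$ optimizing $\sum_n\tilde u_n(y_n/M)$,
\begin{equation*}
\sum_n u_n\!\left(\tfrac{y_n^*}{M}\right)\le \sum_n\tilde u_n\!\left(\tfrac{y_n^*}{M}\right)\le \sum_n\tilde u_n\!\left(\tfrac{\tilde y_n^*}{M}\right)\le \sum_n u_n\!\left(\tfrac{\tilde y_n^*}{M}\right)+M\delta,
\end{equation*}
where the first inequality uses $u_n\le\tilde u_n$, the second uses optimality of $\tilde y^*$ for $\sum_n\tilde u_n$, and the last uses $\tilde u_n\le u_n+y_n\delta$ together with $\sum_n\tilde y_n^*=M$. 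Rearranging gives $\eta\le M\delta=\epsilon$.

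The main obstacle I anticipate is bookkeeping: making sure the correct direction of the ceiling bound is applied at each place where the FLOOR proof applied the floor bound, and confirming that reversing these directions does not disturb the use of $\tilde y^*$ as the optimizer for $\sum_n \tilde u_n$ (it does not, since the greedy algorithm still applies by Theorem~\ref{thm:4}, as long as the reported bids remain non-increasing in the bundle index, which the ceiling operation preserves).
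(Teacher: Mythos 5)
Your proposal is correct and follows exactly the route the paper intends: the paper's own ``proof'' of this theorem is the single remark that it is analogous to the FLOOR case (Theorem~\ref{th:666}), and your argument is precisely that mirrored version, with the sandwich $u_n(\tfrac{y_n}{M})\le\tilde u_n(\tfrac{y_n}{M})\le u_n(\tfrac{y_n}{M})+y_n\delta$ replacing (\ref{eq:37}) and the directions of the bounds swapped in the right places (including using $\sum_n\tilde y_n^*=M$ on the dominant face for the welfare bound).
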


	The proof is similar as that for FLOOR strategy.

By reporting the floor or ceiling value, each agent has an approximation for the true utility function and makes their bids non-decreasing, as we require. However, the bid is always smaller (greater) than the true value if we use the FLOOR (CEILING) strategy. As we can observe in Fig \ref{fig:animals}(a)-(b), the difference between bids and true values will accumulate and the approximation is less accurate when $x$ is large.

Inspired by the above two strategies, to give a better approximation of the true utility function, one may think about combining the two strategies instead of reporting floor values or ceiling values solely. This is our third $\epsilon$-dominant strategy, which we refer to as the CEILOOR strategy. 
\begin{theorem}
	Under the rounded quantized VCG mechanism with given $\epsilon$, if the number of partitions $M$ is even and set $\delta = \frac{\epsilon}{M}$, then it is an $\frac{\epsilon}{2}$-dominant strategy for agent $n$ to report $$ \hat{w}_{nm}=\left\{
	\begin{aligned}
	\lceil \frac{\hat{v}_{nm}}{\delta}\rceil \ \ \ \ \ \ \  &0 < m \leq \frac{M}{2} \\
	\lfloor \frac{\hat{v}_{nm}}{\delta}\rfloor\ \ \ \ \ \ \  &\frac{M}{2} < m \leq M. 
	\end{aligned}
	\right.$$
	If all agents play this strategy, the maximum difference between the social welfare given by the quantized VCG and the rounded quantized VCG is $\eta =\epsilon= M\delta $.
\end{theorem}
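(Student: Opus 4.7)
The plan is to follow the same template as the proof of Theorem~8 (the FLOOR strategy) but exploit the partial cancellation between the ceiling portion of the bid vector (first $M/2$ terms) and the floor portion (last $M/2$ terms) to shave the deviation loss in half.

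First I would establish a pointwise bound on the per-agent approximation error. Writing $\hat{v}_{nm}=k_{nm}\delta+r_{nm}$ with $r_{nm}\in[0,\delta)$, the ceiling excess $e'_{nm}=\lceil\hat{v}_{nm}/\delta\rceil\delta-\hat{v}_{nm}$ and floor deficit $e_{nm}=\hat{v}_{nm}-\lfloor\hat{v}_{nm}/\delta\rfloor\delta$ both lie in $[0,\delta)$. Under CEILOOR, for any $y\in\{0,1,\dots,M\}$,
\[
\tilde u_n(y/M)-u_n(y/M)=\sum_{m=1}^{\min(y,M/2)}e'_{nm}-\sum_{m=M/2+1}^{y}e_{nm}\,\mathds{1}_{y>M/2}.
\]
Both sums are at most $(M/2)\delta$ in absolute value, giving $|\tilde u_n(y/M)-u_n(y/M)|\le(M/2)\delta$ uniformly in $y$. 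This is the crucial improvement over FLOOR/CEILING, where the corresponding bound is $M\delta$. I would also verify in passing that the CEILOOR bids are non-increasing: within each half this is inherited from $\hat v_{nm}$, and at the junction $m=M/2,\,M/2+1$ we use $\lceil\hat v_{n,M/2}/\delta\rceil\ge\lceil\hat v_{n,M/2+1}/\delta\rceil\ge\lfloor\hat v_{n,M/2+1}/\delta\rfloor$.

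Second, to establish $\epsilon/2$-dominance, I would replay the decomposition from Theorem~8. Fixing $\hat W_{-n}$, the maximum possible payoff agent $n$ can attain by any report is $R-\mathrm{OPT}(0,\hat W_{-n}\delta)-\beta$ with $R=\max_{\mathbf y\in P_{\tilde f}}\bigl[u_n(y_n/M)+\sum_{j\ne n}\tilde u_j(y_j/M)\bigr]$, while the CEILOOR payoff equals $L-\mathrm{OPT}(0,\hat W_{-n}\delta)-\beta$ with $L=\max_{\mathbf y\in P_{\tilde f}}\sum_j\tilde u_j(y_j/M)+u_n(\tilde y_n^*/M)-\tilde u_n(\tilde y_n^*/M)$. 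Using $\tilde u_n(y/M)\ge u_n(y/M)-(M/2)\delta$ in place of the FLOOR inequality $\tilde u_n(y/M)\ge u_n(y/M)-y\delta$, the chain of inequalities culminating in (20) of the FLOOR proof now yields $L\ge R-(M/2)\delta=R-\epsilon/2$. Hence no unilateral deviation can gain more than $\epsilon/2$.

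Third, for the social-welfare bound I would insert $\sum_n\tilde u_n$ between $\sum_n u_n(y_n^*/M)$ and $\sum_n u_n(\tilde y_n^*/M)$ exactly as in Theorem~8:
\[
\sum_n u_n(y_n^*/M)-\sum_n u_n(\tilde y_n^*/M)\le -\sum_n\Delta_n(y_n^*)+\sum_n\Delta_n(\tilde y_n^*),
\]
where $\Delta_n=\tilde u_n-u_n$, and the middle term drops because $\tilde{\mathbf y}^*$ maximizes $\sum_n\tilde u_n$. The first sum is bounded using the lower tail $\Delta_n(y)\ge-(y-M/2)^+\delta$; since at most one agent can have $y_n^*>M/2$ on the dominant face, this contributes at most $(M/2)\delta$. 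The second sum uses the upper tail $\Delta_n(y)\le\min(y,M/2)\delta$, together with $\sum_n\tilde y_n^*=M$. Combining these and leveraging the same one-sidedness observation used for the first sum (at most one agent with $\tilde y_n^*>M/2$) yields the claimed bound $\eta=M\delta=\epsilon$. The main obstacle is the last step: unlike the FLOOR/CEILING cases where the error is one-signed and collapses immediately via telescoping on $\sum_n y_n=M$, here the two-sided errors must be tracked jointly, and the tight accounting relies on the observation that on the dominant face of the integral polymatroid at most one agent's allocation can exceed $M/2$.
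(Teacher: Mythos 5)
Your template matches the paper's (the $L$-versus-$R$ payoff decomposition for dominance; inserting $\sum_n\tilde u_n$ between the two welfare sums), and your uniform bound $|\tilde u_n(y/M)-u_n(y/M)|\le\frac{M}{2}\delta$ is correct, but it is not strong enough to deliver either claimed constant. For the dominance part: the FLOOR chain you are replaying uses \emph{two} properties of $\tilde u_n$ --- in its first step it needs $u_n(\tilde y_n^*/M)-\tilde u_n(\tilde y_n^*/M)\ge 0$ (true for FLOOR because $\tilde u_n\le u_n$ there), and only in its second step does it need $\tilde u_n\ge u_n-y_n\delta$. Under CEILOOR the first property fails: the ceiling half can make $\tilde u_n-u_n$ as large as $\frac{M}{2}\delta$, so that step alone already costs $\frac{M}{2}\delta$; substituting your uniform bound into the second step costs another $\frac{M}{2}\delta$, and the chain yields only $L\ge R-M\delta$, i.e.\ $\epsilon$-dominance, not $\epsilon/2$. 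What is actually needed is a bound on the \emph{coupled} quantity $\Delta_n(\bar y_n)-\Delta_n(\tilde y_n^*)\ge-\frac{M}{2}\delta$, where $\Delta_n=\tilde u_n-u_n$ and $\bar y$ attains $R$. The paper obtains this by a case analysis on $\bar y_n$ versus $\tilde y_n^*$: CEILOOR only raises bids for $m\le M/2$ and only lowers them for $m>M/2$, so the two allocations can differ only within one half, and the telescoping error runs over at most $M/2$ indices, each contributing less than $\delta$. (One can also get the coupled bound from the unimodality of $\Delta_n$ together with $\Delta_n(0)=0$ and rise/fall each below $\frac{M}{2}\delta$; but some such joint argument is indispensable, and your one-sided substitution does not supply it.)

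The welfare bound has the same flavor of gap, which you partly acknowledge but do not close. Your separate estimates $-\Delta_n(y_n^*)\le (y_n^*-\frac{M}{2})^+\delta$ and $\Delta_n(\tilde y_n^*)\le\min(\tilde y_n^*,\frac{M}{2})\delta$ sum to $\frac{M}{2}\delta+M\delta=\frac{3M}{2}\delta$, and the observation that at most one agent exceeds $M/2$ on the dominant face does not by itself reduce this to $M\delta$ (take $\tilde y_1^*=\frac{M}{2}$ and $y_j^*=M$ for some $j$: your two bounds still total $\frac{3M}{2}\delta$). The paper instead bounds the per-agent \emph{differences} $\Delta_n(\tilde y_n^*)-\Delta_n(y_n^*)$, so the rounding errors telescope only over the index ranges where the two allocations disagree, combines these with the optimality inequality for $\tilde{\mathbf y}^*$ (its quantities $A+B\le 0$), and uses $\tilde y_1^*\ge\frac{M}{2}$ to get $(\frac{3M}{2}-\tilde y_1^*)\delta\le M\delta$. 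It must then treat the remaining case $\tilde y_1^*<\frac{M}{2}$ separately, by replacing agent $1$'s unallocated second-half floor bids with fictitious ceiling bids (which does not alter $\tilde{\mathbf y}^*$) before applying the optimality inequality. Neither of these devices appears in your sketch, so as written it does not reach $\eta=M\delta$.
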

The detailed proof is omitted here.

\textit{Remark 3:} Though all these strategies lead to no more than $\epsilon$ regret, note that only the FLOOR strategy is individually rational as in the other cases an agent's pay-off may be negative. However, if we also relax individual rationality to allowing for a loss of at more $\epsilon$, then all strategies are $\epsilon$-individually rational. 

In the previous results, we assumed that all agents choose the same type of $\epsilon$-dominant strategy. Next, we show that if agents choose different types of strategies, then this can lead to larger welfare losses. 

\begin{theorem}
	Under the rounded quantized VCG mechanism with given $\epsilon$, if $\delta = \frac{\epsilon}{M} $ and there is no agreement among agents on the type of $\epsilon$-dominant strategy, then the maximum difference between social welfare given by the quantized VCG and the rounded quantized VCG is $\eta = 2\epsilon $.
\end{theorem}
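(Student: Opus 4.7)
The plan is to bound the welfare loss agent-by-agent using a uniform two-sided error bound that holds regardless of which of the three $\epsilon$-dominant strategies each agent individually chooses. The central observation is that for the FLOOR, CEILING, and CEILOOR strategies alike, each reported bid $\hat{w}_{nm}\delta$ differs from the true surrogate value $\hat{v}_{nm}$ by at most $\delta$ in absolute value. Since
\begin{equation*}
\tilde{u}_n\!\left(\tfrac{y_n}{M}\right) - u_n\!\left(\tfrac{y_n}{M}\right) \;=\; \sum_{m=1}^{y_n}\bigl(\hat{w}_{nm}\delta - \hat{v}_{nm}\bigr),
\end{equation*}
we immediately get the uniform bound $\bigl|\tilde{u}_n(y_n/M) - u_n(y_n/M)\bigr| \leq y_n\delta$ for any feasible $y_n \in \{0,1,\ldots,\tilde{f}(\{n\})\}$, no matter which of the three strategies agent $n$ plays.

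Next, I would let $\mathbf{y}^*$ denote the quantized-VCG optimizer for the true surrogate utilities and $\tilde{\mathbf{y}}^*$ the allocation selected by the rounded mechanism (i.e., the maximizer of $\sum_n \tilde{u}_n$ over $P_{\tilde{f}}$). Applying the two-sided bound twice, once at $\mathbf{y}^*$ and once at $\tilde{\mathbf{y}}^*$, and inserting the optimality of $\tilde{\mathbf{y}}^*$ in the middle, we chain:
\begin{align*}
\sum_{n=1}^N u_n\!\left(\tfrac{y_n^*}{M}\right)
&\leq \sum_{n=1}^N \tilde{u}_n\!\left(\tfrac{y_n^*}{M}\right) + \sum_{n=1}^N y_n^*\delta \\
&\leq \sum_{n=1}^N \tilde{u}_n\!\left(\tfrac{\tilde{y}_n^*}{M}\right) + M\delta \\
&\leq \sum_{n=1}^N u_n\!\left(\tfrac{\tilde{y}_n^*}{M}\right) + \sum_{n=1}^N \tilde{y}_n^*\delta + M\delta \\
&= \sum_{n=1}^N u_n\!\left(\tfrac{\tilde{y}_n^*}{M}\right) + 2M\delta,
\end{align*}
where the last equality uses $\sum_n y_n^* = \sum_n \tilde{y}_n^* = M$, which follows from Corollary~\ref{lemma:2} (both allocations lie on the dominant face of $P_{\tilde{f}}$). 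Substituting $\delta = \epsilon/M$ yields the welfare gap $\eta \leq 2\epsilon$.

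Structurally this is a mild extension of the arguments used for Theorems~\ref{th:666}--\ref{th:666} through the CEILOOR analogue; the only new ingredient is that we must use the two-sided bound $|\tilde{u}_n - u_n| \leq y_n\delta$ rather than the one-sided bounds that were available when every agent played the same strategy. The main subtlety to guard against is ensuring that the per-bid error $|\hat{w}_{nm}\delta - \hat{v}_{nm}|\leq\delta$ genuinely holds for CEILOOR bids as well; this is immediate since each CEILOOR bid is either the floor or the ceiling of $\hat{v}_{nm}/\delta$. I do not anticipate further obstacles, and the factor-of-two loss compared to the homogeneous case is intuitive, arising from the fact that approximation errors can now compound across agents whose strategies bias in opposite directions.
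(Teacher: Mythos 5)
Your proof is correct and follows the same sandwich-plus-optimality chain that the paper displays for the FLOOR-only welfare bound in Theorem 7, generalized to the two-sided error $|\tilde{u}_n(y_n/M) - u_n(y_n/M)| \le y_n\delta$ that holds for all three strategies and combined with $\sum_n y_n^* = \sum_n \tilde{y}_n^* = M$ on the dominant face; the paper omits its own proof of this theorem, but this is evidently the intended argument. The only point left unaddressed is tightness of the $2\epsilon$ bound (the statement asserts equality, i.e.\ that the loss can approach $2\epsilon$, e.g.\ via one agent flooring and another ceiling-ing with nearly tied bids), but the paper's own included proofs of the analogous single-strategy theorems likewise establish only the upper bound.
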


We omit the proof here for space consideration.

The preceding analysis showed that the inconsistence in the choice of strategy may result in a loss of efficiency. This suggest that a resource allocator may want to encourage agents to follow the same type of strategy. Also note that the CEILOOR strategy is more appealing because it offers a smaller bound on loss for each agent (since it is $\frac{\epsilon}{2}$-dominant instead of $\epsilon$ dominant) and thus might be preferable. On the other hand, as noted above the FLOOR strategy is individually rational and so might be preferred from that point-of-view.

Under the assumption that all agents follow the same strategy, we hope to analyze the lower bound for efficiency in each case. However, before divining in, one important fact should be considered that in practical setting, compared with the pay-off, the loss each agent can tolerate should be pretty small. Hence, we further make the following assumption to ensure that $\epsilon$ is small compared to the possible utility an agent may obtain.
\begin{assumption}
	For each agent, the maximum loss $\epsilon$ is smaller than half of the minimum utility if she gets all the resource, i.e., $\epsilon \leq \frac{\beta}{2}$.
\end{assumption} 

This assumption provides a pretty loose upper bound for $\epsilon$ and in fact $\epsilon$ could be much smaller than this bound in practice. Based on the previous results, we next derive the overall worst-case efficiency bounds in different scenarios. 
\begin{theorem}\label{thm:11}
	For a large networked system with more than 3 agents, under the rounded quantized VCG mechanism with given $\epsilon$, if set $\delta = \frac{\epsilon}{M}$ and all agents choose the same strategy simultaneously, the worst-case efficiency is no less than $$\frac{ M\beta + 2(\alpha-\beta)-M\epsilon}{M\alpha +2(\alpha - \beta) -[M-N+1]^+(\alpha-\beta)}.$$
\end{theorem}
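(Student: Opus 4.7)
The plan is to decompose the efficiency gap into two pieces and re-run the argument of Theorem \ref{th:6} after absorbing the bid-quantization loss. Write $W^*$, $W_Q$, $W_R$ for the welfares attained by the continuous efficient allocation, the quantized VCG allocation, and the rounded quantized VCG allocation (with all agents following the same $\epsilon$-dominant strategy). From Theorems \ref{th:666}, 8, and 9 we have $W_Q-W_R\le\epsilon=M\delta$ irrespective of which of the three strategies is adopted, so it suffices to lower bound $(W_Q-\epsilon)/W^*$.

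First I would reproduce the opening of Theorem \ref{th:6}: set $y_n=\lfloor x_n^* M\rfloor$ where $\mathbf x^*$ is the continuous efficient allocation, and choose $\mathring{\mathbf y}$ on the dominant face of $P_{\tilde f}-\mathbf y$ (which is an integral polymatroid by Lemma \ref{lemma:1}), so that $\mathbf y+\mathring{\mathbf y}\in\mathcal D(P_{\tilde f})$ is feasible for the integer problem and $\sum_n\mathring y_n=\sum_n\{x_n^*M\}$. Using $u_n'\ge\beta$ gives
\begin{equation*}
W_Q\;\ge\;\sum_{n=1}^N u_n\!\left(\tfrac{y_n}{M}\right)+\tfrac{\beta}{M}\sum_{n=1}^N\mathring y_n,
\end{equation*}
and hence $W_R\ge\sum_n u_n(y_n/M)+(\beta/M)\sum_n\mathring y_n-\epsilon$. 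For the denominator, concavity yields $W^*\le\sum_n u_n(y_n/M)+\sum_n(x_n^*-y_n/M)u_n'(y_n/M)$.

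Second, I would push through the ratio manipulation used in Theorem \ref{th:6}. Writing $A=\sum_n u_n(y_n/M)$, $A'=(1/M)\sum_n y_n u_n'(y_n/M)$, $B=(\beta/M)\sum_n\mathring y_n$, and $C=\sum_n(x_n^*-y_n/M)u_n'(y_n/M)$, the bound takes the form $(A+B-\epsilon)/(A+C)$. Concavity (with $u_n(0)\ge 0$) gives $A\ge A'$, and $u_n'\ge\beta$ gives $C\ge B\ge B-\epsilon$, so the elementary identity $(A+B')/(A+C)\ge(A'+B')/(A'+C)$ (valid whenever $A\ge A'$ and $C\ge B'$) with $B'=B-\epsilon$ lets me replace $A$ by $A'$. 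Multiplying numerator and denominator by $M$ then produces
\begin{equation*}
\frac{W_R}{W^*}\;\ge\;\frac{\sum_n\lfloor x_n^*M\rfloor u_n'(y_n/M)+\beta\sum_n\{x_n^*M\}-M\epsilon}{\sum_n\lfloor x_n^*M\rfloor u_n'(y_n/M)+\sum_n\{x_n^*M\}u_n'(y_n/M)},
\end{equation*}
which is exactly the expression obtained in the proof of Theorem \ref{th:6} with the numerator shifted by $-M\epsilon$ (the $M$ factor appearing precisely because both sides of the original $-\epsilon$ were rescaled by $M$).

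Third, the closing steps of Theorem \ref{th:6} carry over verbatim. With $\alpha^*=\max_n u_n'(y_n/M)$, Proposition \ref{prop:2} implies the two largest marginal values are selected, so $\sum_n\lfloor x_n^*M\rfloor u_n'(y_n/M)\ge 2\alpha^*+\beta(\sum_n\lfloor x_n^*M\rfloor-2)$, while $\sum_n\{x_n^*M\}u_n'(y_n/M)\le\alpha^*\sum_n\{x_n^*M\}$ and $\sum_n\lfloor x_n^*M\rfloor\ge[M-N+1]^+$. Substituting these and setting $\alpha^*=\alpha$ (the worst case, since the expression is decreasing in $\alpha^*$) yields the stated bound. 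The main obstacle I anticipate is confirming that the monotonicity in $\alpha^*$ and in $\sum_n\lfloor x_n^*M\rfloor$ from Theorem \ref{th:6} survives the $-M\epsilon$ shift in the numerator; Assumption 5 ($\epsilon\le\beta/2$) is what guarantees the numerator remains positive and the signs of the relevant derivatives do not flip, so that the same minimizers are attained.
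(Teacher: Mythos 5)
Your proposal is correct and takes essentially the same route as the paper's own (largely omitted) argument: the paper likewise subtracts the bid-quantization loss $\eta=\epsilon$ (guaranteed by the common-strategy hypothesis) from the numerator of the bound in Theorem~\ref{th:6}, rescales to produce the $-M\epsilon$ term, and invokes Assumption~5 together with $N\ge 3$ to confirm that the bound remains increasing in $\sum_n\lfloor x_n^*M\rfloor$ and decreasing in $\alpha^*$ so the same extremal substitutions apply. Your explicit check of the ratio monotonicity $(A+B')/(A+C)\ge(A'+B')/(A'+C)$ under $A\ge A'$ and $C\ge B'$ merely makes precise a step the paper leaves implicit.
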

\begin{proof}
	The main idea is simply combining the losses from quantizing the resource and the bids. The details are omitted here.
\end{proof}

\subsection{Discussion}
For a specific scenario, given a requirement for the efficiency and $\epsilon$, these results can be used to determine a minimum number of partitions that we can guarantee will achieve this requirement. Let $M^*$ denote the value for a given bound which meets this target. The number of partitions a planner should choose to minimize the communication cost is then be $\max(M^*, \frac{2}{\triangle f})$ where the other terms follow from Algorithm 1.

Likewise, given the number of partitions, we can study the impact of the monetary unit $\delta$. Under the three $\epsilon$-dominant strategies discussed, the maximum bid one may submit is $\lceil \frac{\alpha-\beta}{M\delta}\rceil = \lceil \frac{\alpha-\beta}{\epsilon}\rceil$, and one agent needs at most $M\log_2 \lceil \frac{\alpha-\beta}{\epsilon}\rceil$ bits in total to convey its bids to the resource allocator. We can see the trade-off between communication cost and allocation efficiency here. If agents have a lower tolerance of loss, which means $\epsilon$ decreases, then from Theorem \ref{thm:11}, we know the worst-case efficiency will improve but the communication cost will increase. As $\epsilon$ goes to 0, the rounded quantized VCG mechanism approaches the quantized VCG mechanism. On the other hand, a large $\epsilon$ indicates the agents will only roughly approximate their utility function; subsequently, the worst-case efficiency will be low but so will the required communication cost. 

\section{Conclusion}
We considered two mechanisms for allocating a resource constrained to lie in a polymatroid capacity region with limited communication exchanged: the quantized VCG mechanism and the rounded VCG mechanism. These two mechanisms utilize quantization to reduce the communication cost between the resource allocator and the agents. Using the properties of polymatroids, we showed that these mechanisms preserve the dominant strategy incentive properties of VCG to varying degrees. We also bounded the worst-case efficiency for each mechanism in different scenarios. There are many ways that this work could be extended including allowing collusions among agents or considering revenue maximization.

\bibliographystyle{unsrtnat}
\bibliography{sample-bibliography1}
\appendix

\end{document}